\newcommand{\acdef}[1]{\emph{\acl{#1}} \textup{(\acs{#1})}\acused{#1}}		
\colorlet{MyRed}{Crimson!75!Black}
\colorlet{MyGreen}{DarkGreen!80!Black}
\colorlet{MyBlue}{MediumBlue}
\newcommand{\afterhead}{}		
\newcommand{\para}[1]{\subsubsection*{\textbf{#1}\afterhead}}
\newcommand{\citet}{\cite}
\newcommand{\citep}{\cite}
\newcommand{\EMAIL}[1]{\email{\href{mailto:#1}{#1}}}
\crefname{assumption}{Assumption}{Assumptions}
\theoremstyle{plain}
\newtheorem*{corollary*}{Corollary}		
\theoremstyle{definition}
\newtheorem{definition}{Definition}		
\newtheorem{example}{Example}		
\newtheorem*{definition*}{Definition}		
\newtheorem*{assumption*}{Assumptions}		
\newtheorem*{example*}{Example}		
\theoremstyle{remark}
\newtheorem*{remark*}{Remark}		
\def\endenv{\hfill{\small\S}}
\newenvironment{Proof}[1][Proof]{\begin{proof}[#1]}{\end{proof}}
\newcommand{\debug}[1]{#1}		
\newcommand{\newmacro}[2]{\newcommand{#1}{\debug{#2}}}		
\newcommand{\newop}[2]{\DeclareMathOperator{#1}{\debug{#2}}}		
\DeclarePairedDelimiter{\braces}{\{}{\}}		
\DeclarePairedDelimiter{\bracks}{[}{]}		
\DeclarePairedDelimiter{\parens}{(}{)}		
\DeclarePairedDelimiter{\abs}{\lvert}{\rvert}		
\DeclarePairedDelimiter{\floor}{\lfloor}{\rfloor}		
\DeclarePairedDelimiterX{\setdef}[2]{\{}{\}}{#1:#2}		
\DeclarePairedDelimiterXPP{\exclude}[1]{\mathopen{}\setminus}{\{}{\}}{}{#1}
\newcommand{\N}{\mathbb{N}}		
\newcommand{\Z}{\mathbb{Z}}		
\newcommand{\R}{\mathbb{R}}		
\DeclareMathOperator*{\argmax}{arg\,max}		
\DeclareMathOperator*{\argmin}{arg\,min}		
\DeclareMathOperator{\dist}{dist}		
\DeclareMathOperator{\one}{\mathds{1}}		
\DeclareMathOperator{\relint}{ri}		
\newcommand{\cf}{cf.\xspace}		
\newcommand{\eg}{e.g.,\xspace}		
\newcommand{\ie}{i.e.,\xspace}		
\newcommand{\textpar}[1]{\textup(#1\textup)}		
\newcommand{\alt}[1]{#1'}		
\newcommand{\altalt}[1]{#1''}		
\newmacro{\dd}{\:d}		
\newcommand{\eps}{\varepsilon}		
\newcommand{\insum}{\sum\nolimits}		
\newmacro{\const}{c}		
\newmacro{\coef}{\lambda}		
\newmacro{\param}{\theta}		
\newmacro{\params}{\Theta}		
\newmacro{\pexp}{p}		
\newmacro{\qexp}{q}		
\newmacro{\rexp}{r}		
\newmacro{\beforestart}{0}		
\newmacro{\start}{1}		
\newmacro{\afterstart}{2}		
\newmacro{\running}{\start,\afterstart,\dotsc}		
\newmacro{\run}{n}		
\newmacro{\runalt}{k}		
\newmacro{\runaltalt}{m}		
\newmacro{\nRuns}{T}		
\newmacro{\runs}{\mathcal{\nRuns}}		
\newmacro{\state}{X}		
\newmacro{\statealt}{Y}		
\newmacro{\statealtalt}{Z}		
\newcommand{\init}[1][\state]{\debug{#1}_{\start}}		
\newcommand{\curr}[1][\state]{\debug{#1}_{\run}}		
\renewcommand{\next}[1][\state]{\debug{#1}_{\run+1}}		
\newmacro{\tstart}{0}		
\newmacro{\timealt}{s}		
\newmacro{\horizon}{T}		
\newmacro{\traj}{x}		
\newmacro{\trajalt}{y}		
\newmacro{\trajaltalt}{z}		
\newmacro{\flowmap}{\Theta}		
\DeclarePairedDelimiterXPP{\flowof}[2]{\flowmap_{#1}}{(}{)}{}{#2}		
\newop{\Nash}{NE}		
\newop{\CE}{CE}		
\newop{\CCE}{CCE}		
\newop{\NI}{NI}		
\newop{\brep}{br}		
\newop{\reg}{Reg}		
\newop{\preg}{\overline{Reg}}		
\newop{\val}{val}		
\newmacro{\play}{i}		
\newmacro{\playalt}{j}		
\newmacro{\playaltalt}{k}		
\newmacro{\nPlayers}{N}		
\newmacro{\players}{\mathcal{\nPlayers}}		
\newmacro{\pure}{\alpha}		
\newmacro{\purealt}{\beta}		
\newmacro{\purealtalt}{\gamma}		
\newmacro{\pureq}{\eq[\pure]}		
\newmacro{\nPures}{A}		
\newmacro{\pures}{\mathcal{\nPures}}		
\newmacro{\strat}{x}		
\newmacro{\stratalt}{\alt\strat}		
\newmacro{\strataltalt}{\altalt\strat}		
\newmacro{\strats}{\mathcal{X}}		
\newmacro{\intstrats}{\strats^{\circle}}		
\newmacro{\score}{y}		
\newmacro{\scorealt}{\alt\score}		
\newcommand{\eq}{\sol}		
\newmacro{\loss}{\ell}		
\newmacro{\pay}{u}		
\newmacro{\payv}{v}		
\newmacro{\pot}{f}		
\newmacro{\game}{\mathcal{G}}		
\newmacro{\gameall}{\game(\players,\points,\loss)}		
\newmacro{\fingame}{\Gamma}		
\newmacro{\fingameall}{\Gamma(\players,\pures,\pay)}		
\newmacro{\gmat}{g}		
\newmacro{\gdist}{\dist_{\gmat}}
\newmacro{\mfld}{M}		
\newmacro{\form}{\omega}		
\newmacro{\tvec}{z}		
\newmacro{\uvec}{u}		
\newmacro{\ball}{\mathbb{B}}		
\newmacro{\sphere}{\mathbb{S}}		
\newmacro{\vertex}{v}		
\newmacro{\vertexalt}{\alt\vertex}		
\newmacro{\vertexaltalt}{\altalt\vertex}		
\newmacro{\nVertices}{V}		
\newmacro{\vertices}{\mathcal{\nVertices}}		
\newmacro{\edge}{e}		
\newmacro{\edgealt}{\alt\edge}		
\newmacro{\edgealtalt}{\altalt\edge}		
\newmacro{\nEdges}{E}		
\newmacro{\edges}{\mathcal{\nEdges}}		
\newmacro{\graph}{\mathcal{G}}		
\newmacro{\graphall}{\graph(\vertices,\edges)}		
\newmacro{\vecspace}{\mathcal{V}}		
\newmacro{\subspace}{\mathcal{W}}		
\newmacro{\bvec}{e}		
\newmacro{\bvecs}{\mathcal{E}}		
\newmacro{\vvec}{v}		
\newmacro{\coord}{i}		
\newmacro{\coordalt}{j}		
\newmacro{\coordaltalt}{k}		
\newmacro{\nCoords}{d}		
\newmacro{\dims}{\nCoords}		
\newmacro{\vdim}{\nCoords}		
\newmacro{\pspace}{\mathcal{X}}		
\newmacro{\dspace}{\mathcal{Y}}		
\newmacro{\ppoint}{x}		
\newmacro{\ppointalt}{\alt\ppoint}		
\newmacro{\ppointaltalt}{\altalt\ppoint}		
\newmacro{\ppoints}{\mathcal{X}}		
\newmacro{\pstate}{X}		
\newmacro{\dpoint}{y}		
\newmacro{\dpointalt}{\alt\dpoint}		
\newmacro{\dpointaltalt}{\altalt\dpoint}		
\newmacro{\dpoints}{\mathcal{Y}}		
\newmacro{\dstate}{Y}		
\newmacro{\pvec}{u}		
\newmacro{\dvec}{v}		
\newmacro{\mat}{M}		
\newmacro{\hmat}{H}		
\newmacro{\ones}{\mathbf{1}}		
\newmacro{\eye}{I}		
\newmacro{\zer}{\mathbf{0}}		
\DeclarePairedDelimiter{\norm}{\lVert}{\rVert}		
\DeclarePairedDelimiterXPP{\dnorm}[1]{}{\lVert}{\rVert}{_{\ast}}{#1}		
\DeclarePairedDelimiterXPP{\onenorm}[1]{}{\lVert}{\rVert}{_{1}}{#1}		
\DeclarePairedDelimiterXPP{\twonorm}[1]{}{\lVert}{\rVert}{_{2}}{#1}		
\DeclarePairedDelimiterXPP{\supnorm}[1]{}{\lVert}{\rVert}{_{\infty}}{#1}		
\DeclarePairedDelimiterX{\braket}[2]{\langle}{\rangle}{#1,#2}		
\DeclarePairedDelimiterX{\inner}[2]{\langle}{\rangle}{#1,#2}		
\newcommand{\defeq}{\coloneqq}		
\newcommand{\from}{\colon}		
\newmacro{\source}{O}		
\newmacro{\sink}{D}		
\newmacro{\pair}{i}		
\newmacro{\pairalt}{j}		
\newmacro{\pairaltalt}{k}		
\newmacro{\nPairs}{N}		
\newmacro{\pairs}{\mathcal{\nPairs}}		
\newmacro{\route}{p}		
\newmacro{\routealt}{\alt\route}		
\newmacro{\routealtalt}{\altalt\route}		
\newmacro{\nRoutes}{P}		
\newmacro{\routes}{\mathcal{\nRoutes}}		
\newmacro{\flow}{f}		
\newmacro{\flowalt}{\alt\flow}		
\newmacro{\flowaltalt}{\altalt\flow}		
\newmacro{\flows}{\mathcal{F}}		
\newmacro{\load}{x}		
\newmacro{\loadalt}{\alt\load}		
\newmacro{\loadaltalt}{\altalt\load}		
\newmacro{\loads}{\mathcal{X}}		
\newop{\Opt}{Opt}		
\newop{\Sol}{Sol}		
\newop{\gap}{Gap}		
\newop{\orcl}{Or}		
\newmacro{\obj}{f}		
\newmacro{\objalt}{g}		
\newmacro{\sobj}{F}		
\newmacro{\gvec}{g}		
\newmacro{\oper}{A}		
\newmacro{\vecfield}{v}		
\newcommand{\sol}[1][\point]{#1^{\ast}}		
\newmacro{\vbound}{G}		
\newmacro{\lips}{L}		
\newmacro{\strong}{\alpha}		
\newmacro{\smooth}{\beta}		
\newop{\tspace}{T}		
\newop{\tcone}{TC}		
\newop{\dcone}{\tcone^{\ast}}		
\newop{\ncone}{NC}		
\newop{\pcone}{PC}		
\newop{\hull}{\Delta}		
\newmacro{\cvx}{\mathcal{C}}		
\newmacro{\subd}{\partial}		
\newmacro{\minmax}{L}		
\newmacro{\minvar}{\theta}		
\newmacro{\minvaralt}{\alt\minvar}		
\newmacro{\minvars}{\Theta}		
\newmacro{\maxvar}{\phi}		
\newmacro{\maxvaralt}{\alt\maxvar}		
\newmacro{\maxvars}{\Phi}		
\newop{\Eucl}{\Pi}		
\newop{\logit}{\Lambda}		
\newop{\dkl}{KL}		
\newmacro{\hreg}{h}		
\newmacro{\breg}{D}		
\newmacro{\mprox}{P}		
\newmacro{\mirror}{Q}		
\newmacro{\fench}{F}		
\newmacro{\hstr}{K}		
\newmacro{\depth}{H}		
\newmacro{\proxdom}{\points^{\hreg}}		
\DeclarePairedDelimiterXPP{\proxof}[2]{\mprox_{#1}}{(}{)}{}{#2}		
\newmacro{\zone}{\mathbb{D}}		
\newmacro{\point}{x}		
\newmacro{\pointalt}{\alt\point}		
\newmacro{\pointaltalt}{\altalt\point}		
\newmacro{\points}{\mathcal{K}}		
\newmacro{\intpoints}{\relint\points}		
\newmacro{\base}{p}		
\newmacro{\basealt}{q}		
\newmacro{\basealtalt}{u}		
\newmacro{\open}{\mathcal{U}}		
\newmacro{\closed}{\mathcal{C}}		
\newmacro{\cpt}{\mathcal{K}}		
\newmacro{\nhd}{\mathcal{U}}		
\newop{\ex}{\mathbb{E}}		
\newop{\prob}{\mathbb{P}}		
\newop{\Var}{Var}		
\newop{\simplex}{\hull}		
\providecommand\given{}		
\DeclarePairedDelimiterXPP{\exof}[1]{\ex}{[}{]}{}{
\renewcommand\given{\nonscript\,\delimsize\vert\nonscript\,\mathopen{}} #1}
\DeclarePairedDelimiterXPP{\probof}[1]{\prob}{(}{)}{}{
\renewcommand\given{\nonscript\:\delimsize\vert\nonscript\:\mathopen{}} #1}
\DeclarePairedDelimiterXPP{\oneof}[1]{\one}{\{}{\}}{}{
\renewcommand\given{\nonscript\,\delimsize\vert\nonscript\,\mathopen{}} #1}
\newmacro{\sample}{\omega}		
\newmacro{\samples}{\Omega}		
\newmacro{\filter}{\mathcal{F}}		
\newmacro{\probspace}{(\samples,\filter,\prob)}		
\newmacro{\event}{E}       
\newmacro{\eventalt}{H}       
\newmacro{\mean}{\mu}		
\newmacro{\sdev}{\sigma}		
\newmacro{\variance}{\sdev^{2}}		
\newcommand{\est}[1]{\hat #1}		
\newmacro{\signal}{V}		
\newmacro{\step}{\gamma}		
\newmacro{\learn}{\eta}		
\newmacro{\mix}{\eps}		
\newmacro{\conf}{\delta}		
\newmacro{\mindiff}{d^{\ast}}		
\newmacro{\thres}{\ell^{\ast}}		
\newmacro{\proper}{\tau}		
\newmacro{\error}{Z}		
\newmacro{\noise}{U}		
\newmacro{\bias}{b}		
\newmacro{\brown}{W}		
\newmacro{\serror}{\theta}		
\newmacro{\snoise}{\xi}		
\newmacro{\sbias}{\psi}		
\newmacro{\sbound}{M}		
\newmacro{\bbound}{B}		
\newmacro{\noisepar}{\sdev}		
\newmacro{\noisevar}{\variance}		
\newmacro{\round}{\mathcal{R}} 
\begin{document}


\title{
Learning in Games with Quantized Payoff Observations}

\author
[K.~Lotidis]
{Kyriakos Lotidis$^{\ddag}$}
\address{$^{\ddag}$\,%
Department of Management Science \& Engineering, Stanford University.}
\EMAIL{klotidis@stanford.edu}
\author
[P.~Mertikopoulos]
{Panayotis Mertikopoulos$^{\diamond,\star}$}
\address{$^{\diamond}$\,%
Univ. Grenoble Alpes, CNRS, Inria, Grenoble INP, LIG, 38000 Grenoble, France.}
\address{$^{\star}$\,%
Criteo AI Lab.}
\EMAIL{panayotis.mertikopoulos@imag.fr}
\author
[N.~Bambos]
{Nicholas Bambos$^{\dagger}$}
\address{$^{\dagger}$\,%
Department of Electrical Engineering and Management Science \& Engineering, Stanford University.}
\EMAIL{bambos@stanford.edu}


\maketitle




\newacro{LHS}{left-hand side}
\newacro{RHS}{right-hand side}
\newacro{iid}[i.i.d.]{independent and identically distributed}
\newacro{lsc}[l.s.c.]{lower semi-continuous}
\newacro{NE}{Nash equilibrium}
\newacroplural{NE}[NE]{Nash equilibria}

\newacro{FTRL}{``follow the regularized leader''}
\newacro{FTQL}{follow the quantized leader}
\newacro{IWE}{importance-weighted estimator}
\newacro{EW}{exponential \textpar{{\normalfont or} multiplicative} weights}

\begin{abstract}
%
%
This paper investigates the impact of feedback quantization on multi-agent learning.
In particular, we analyze the equilibrium convergence properties of the well-known \ac{FTRL} class of algorithms when players can only observe a quantized (and possibly noisy) version of their payoffs.
In this information-constrained setting, we show that coarser quantization triggers a qualitative shift in the convergence behavior of \ac{FTRL} schemes.
Specifically, if the quantization error lies below a threshold  value (which depends \emph{only} on the underlying game and not on the level of uncertainty entering the process or the specific \ac{FTRL} variant under study), then
\begin{enumerate*}
[(\itshape i\hspace*{1pt}\upshape)]
\item
\ac{FTRL} is attracted to the game's strict Nash equilibria with arbitrarily high probability;
and
\item
the algorithm's asymptotic rate of convergence remains the same as in the non-quantized case.
\end{enumerate*}
Otherwise, for larger quantization levels, these convergence properties are lost altogether:
players may fail to learn anything beyond their initial state, even with full information on their payoff vectors.
This is in contrast to the impact of quantization in continuous optimization problems, where the quality of the obtained solution degrades smoothly with the quantization level.
\end{abstract}

\allowdisplaybreaks		
\acresetall		


\section{Introduction}
\label{sec:introduction}

In the implementation of distributed learning and control systems, observations and feedback often need to be quantized down to the bit-resolution allowed by the sensing/sampling and data communication rates.
This is driven by various design pressures, including sensing/sampling and communication bandwidth constraints, as well as computation, memory, and power limitations.
In particular, such challenges are ubiquitous in current and emerging distributed systems (like the Internet of Things or edge/mobile computing), where edge devices must often contend with granular, reduced-precision data and measurements.
For example, a mobile device may only be able to measure the quality of its downlink channel up to a relatively low precision and then request setting the downlink transmitter power (which also affects other devices via interference) based on low-rate feedback.
Likewise, an edge computing node may only be able to receive a low-bit representation of the data of a control-plane application and must then process and resubmit this data using some low-bit encoding.

Reduced-precision settings of this type can be modeled efficiently by assuming that, in addition to any random factors affecting the process, observable quantities are also \emph{quantized}, reflecting the granularity of the measurement\,/\,communication process.
With this in mind, our paper examines \emph{quantized multi-agent learning processes} that unfold as follows:
\begin{enumerate}
\item
At each stage $\run=\running$, every participating agent selects an action from some finite set.
\item
Each agent's reward is determined by their chosen action and that of all other participating agents.
\item
Agents observe a noisy quantized version of their rewards, they update their actions, and the process repeats.
\end{enumerate}
In terms of the agents' learning dynamics \textendash\ \ie the way that they update their actions \textendash\ we consider the widely studied \acdef{FTRL} class of algorithms, as introduced by \cite{SSS06} and containing as special cases the seminal multiplicative/exponential weights algorithm of \cite{Vov90,LW94,ACBFS95},
as well as the standard projection dynamics of \cite{Fri91,Hop99a}.
Within this setting, we aim to address the following questions:
\begin{enumerate}
\item 
What is the \emph{impact of quantization} on the learning process relative to the non-quantized case?
\item
Is there \emph{robust deterioration} \textendash\ \ie graceful degradation, as opposed to abrupt collapse \textendash\ of the outcome of the learning process as the coarseness of the quantization increases?
\end{enumerate}

\para{Related work}

The literature on learning in games has traditionally focused on identifying when a learning process converges to equilibrium \textendash\ locally or globally.
In this regard, a widely known result is that the empirical frequency of play under no-regret learning converges to the game's set of \emph{coarse correlated equilibria}
\cite{MV78,HMC00}. 
However, since this set may contain highly undesirable, dominated strategies \cite{VZ13}, this convergence result typically needs to be refined.

On that account, a very large body of work has focused on the sharper question of convergence to a \acdef{NE}, \ie a state from which no player has an incentive to deviate unilaterally.
This question is much more difficult and only partial results are known:
as a representative (but otherwise incomplete) list of relevant results, \cite{Bra16,CMS10,LC03,LC05} established the convergence of an ``adjusted'' variant of \ac{FTRL} to approximate \aclp{NE} in potential, $2\times m$, and $2\times\dotsm\times2$ games.
This convergence was established under the assumption that players receive perfect realizations of their in-game payoffs \textendash\ \ie there are no observation or measurement errors, random or otherwise.
More recently, and under similar feedback assumptions, \cite{GVM21} showed that, in any generic game, strict \aclp{NE} \textendash\ \ie \aclp{NE} where each player has a unique best response \textendash\ are \emph{precisely} the states that are stable and attracting under the (unadjusted) dynamics of \ac{FTRL} in discrete time.

The algorithmic stability and convergence results discussed above were achieved via the use of an \acdef{IWE} which provides a counterfactual surrogate for the payoff that a player would have obtained from an action that they did not actually pick.
The key property of this estimator is that its bias can be balanced against its variance so as to yield progressively more accurate payoff predictions with only a mild deterioration in precision.
In turn, this ``asymptotic unbiasedness'' property plays a major role in the convergence results discussed above because it allows players to eventually gravitate towards actions that yield consistently better payoffs against the ``mean field'' of the other players' actions.

However, this crucial property is lost the moment quantization enters the picture:
the granularity of the players' payoff observations can \emph{never} become finer than the quantization gap of their feedback\,/\,measurement mechanism, so any learning process would not be able to resolve this gap either.
Indeed, \emph{any} payoff estimator must contend with a persistent bias that disallows the resolution of payoffs corresponding to nearby mixed strategies \textendash\ \eg playing $(1/3,1/3,1/3)$ versus $(1/3,1/3-\eps,1/3+\eps)$ in Rock-Paper-Scissors for sufficiently small $\eps$.
As a result, any learning process that relies on gradual changes in the players' mixed strategies \textendash\ like \ac{FTRL} and its variants \textendash\ would seem unable to make consistent progress towards a \acl{NE}, even if starting relatively close.

\para{Our contributions}

Our analysis paints a different account of the above.
First, if the quantization error does not exceed a certain threshold value, we show that \ac{FTRL} with quantized feedback \textendash\ dubbed ``\emph{\acl{FTQL}}\acused{FTQL}'' \textpar{\ac{FTQL}} for short \textendash\ continues to identify strict \aclp{NE} \emph{with perfect accuracy}, despite the \emph{persistent bias} induced by the quantization process.
More precisely, we show that strict \aclp{NE} are locally stable and attracting with arbitrarily high probability under \ac{FTQL}, just as in the case of \ac{FTRL} with \emph{perfect} payoff-based feedback.
Second, we derive a series of sharp convergence rate estimates for \ac{FTQL} which echo the convergence speed of \ac{FTRL} with \emph{non-quantized} feedback as derived recently in \cite{GVM21b}.
Specifically,
\emph{despite the quantization},
the convergence rate of \ac{FTQL} differs from its non-quantized variant only by a multiplicative constant, showing that the algorithm's asymptotic rate of convergence remains otherwise unimpeded by the coarseness of the quantization scheme (as long as this coarseness does not exceed the critical level beyond which learning is impossible).

Importantly, this quantization threshold depends \emph{only} on the underlying game and is otherwise independent of the level of uncertainty involved and/or the specific \ac{FTQL} variant in play.
Beyond this threshold, the learning landscape changes abruptly and dramatically.
In particular, for larger values of the quantization gap, the convergence properties of \ac{FTQL} are lost altogether:
players may fail to learn anything beyond their initial state, even with full information on their mixed payoff vectors (even in simple $2\times2$ common interest games that are otherwise easy to learn).

This behavior comes in stark contrast to the impact of quantization in continuous optimization where the quality of the obtained solution degrades gracefully with the quantization gap \cite{Nes04,NN94,JNT11}.
This suggests a fundamental shift in design principles when dealing with game-theoretic problems as above:
the robust deterioration observed in the discretization of continuous optimization problems is no longer present, and the quantization granularity has to be tuned judiciously as a function of the agents' interactions.

\section{Background and motivation}
\label{sec:preliminaries}

\subsection{Games in normal form}

Throughout this paper, we consider normal form games with a finite number of players and a finite number of actions per player.
More precisely, we posit that each player, indexed by $\play \in \players = \braces{1,\dots,\nPlayers}$, has a finite set of \emph{actions} \textendash\ or \emph{pure strategies} \textendash\ $\pure_{\play} \in \pures_{\play}$ and a \emph{payoff function} $\pay_{\play}: \pures \rightarrow \R$, where $\pures \defeq \prod_{\play \in \players}\pures_{\play}$ denotes the set of all possible action profiles $\pure = (\pure_{1},\dotsc,\pure_{\nPlayers})$.
Players can mix their strategies, \ie play a probability distribution $\strat_{\play} \in \strats_{\play}\defeq \simplex\parens{\pures_{\play}}$ over their pure strategies, and we write $\strat = \parens{\strat_\start,\dots, \strat_{\nPlayers}} \in \strats \defeq \prod_{\play \in \players}\strats_{\play}$ for the associated \emph{mixed strategy profile}.
For notational convenience, we will also write $\dspace_{\play}\defeq\R^{\pures_{\play}}$ and $\dspace \defeq \prod_{\play \in \players}\dspace_{\play}$, for the space of payoff vectors of player $\play\in\players$ and the ensemble thereof.

Given a mixed strategy profile $\strat\in\strats$, we will use the standard shorthand $\strat = (\strat_{\play};\strat_{-\play})$ to keep track of the mixed strategy profile $\strat_{-\play}$ of all players other than $\play$, and we further define
\begin{enumerate}
[(\itshape i\hspace*{1pt}\upshape)]
\item
The \emph{expected payoff} of player $\play$ under $\strat$:
\begin{equation}
\pay_{\play}\parens{\strat}
	= \pay_{\play}\parens{\strat_{\play}, \strat_{-\play}} 
\end{equation}
\item
The \emph{mixed payoff vector} of player $\play$ under $\strat$:
\begin{equation}
\payv_{\play}\parens{\strat}
	= \parens{\pay_{\play}\parens{\pure_{\play};\strat_{-\play}}}_{\pure_{\play}\in\pures_{\play}}
\end{equation}
\end{enumerate}
In words, $\payv_{\play}(\strat) \in \dspace_{\play}$ simply collects the expected payoffs $\payv_{\play\pure_{\play}}\parens{\strat} \defeq \pay_{\play}\parens{\pure_{\play};\strat_{-\play}}$, $\pure_{\play}\in\pures_{\play}$, that player $\play\in\players$ would have obtained by playing $\pure_{\play}\in\pures_{\play}$ against the mixed strategy profile $\strat_{-\play}$ of all other players.
Then, aggregating over all players $\play\in\players$, we will also write $\payv\parens{\strat} = \parens{\payv_\start\parens{\strat},\dots,\payv_\nPlayers\parens{\strat}} \in \dspace$ for the ensemble of the players' mixed payoff vectors.

In terms of solution concepts, we say that a strategy profile $\eq$ is a \acdef{NE} if no player has an incentive to unilaterally deviate from it, \ie
\begin{equation}
\label{def:Nash}
\tag{NE}
\pay_{\play}\parens{\eq}
	\geq \pay_{\play}\parens{\strat_{\play};\eq_{-\play}}
	\quad
	\forall \strat_{\play} \in \strats_{\play},
	\forall \play \in \players.
\end{equation}
Finally, we say that $\eq$ is a \emph{strict \acl{NE}} if the inequality in \eqref{def:Nash} is strict for all $\strat_{\play} \neq \eq_{\play}$, $\play\in\players$, \ie if any deviation from $\eq_{\play}$ results to a strictly worse payoff for the deviating player $\play \in \players$.
It is straightforward to verify that a strict equilibrium $\eq\in\strats$ is also \emph{pure} in the sense that each player assigns positive probability only to a single pure strategy. 

\subsection{Quantization: definitions and impact on learning}

\subsubsection{Basics of quantization}
As we discussed in the introduction,
our paper concerns models of repeated play where all observable quantities \textendash\ the players' payoffs, the associated vectors, etc. \textendash\ are subject to rounding and/or precision cutoffs.
To formalize this, let $\ell>0$ be the \emph{quantization error} of the players' observation\,/\,measurement device, and let $\round\from\R\to \ell\Z \equiv \{\dotsc,-2\ell,-\ell,0,\ell,2\ell,\dotsc\}$ be the associated \emph{quantization operator} which reduces to the identity on $\ell\Z$, and which maps any real number $\point\in\R$ to an integer multiple $\round(\point) \in \ell\Z$ of $\ell$ such that $\abs{\point - \round(\point)} \leq \ell/2$ for all $\point\in\R$.
For example,
the ``floor'' operation $\point \mapsto \floor{\point}$ has quantization error $\ell=2$ (since $\sup_{\point}\abs{\point - \floor{\point}} = 1$),
whereas
the ``round half away from zero'' (or ``commercial rounding'') operation $\round(\point) = \mathrm{sgn}(\point) \floor{\abs{\point} + 1/2}$ in Python and Java has a quantization error of $\ell = 1$.

Vectorizing this construction in the obvious way, we will write $\round(\vvec) \defeq (\round(\vvec_{k}))_{k=1,\dotsc,\vdim}$ for an arbitrary vector $\vvec\in\R^{\vdim}$.
Then, by construction, $\round$ reduces to the identity on $(\ell\Z)^{\vdim}$ and we have
\begin{equation}
\label{eq:quant-error}
\supnorm{\round(\vvec) - \vvec}
	\leq \ell/2
	\quad
	\text{for all $\vvec\in\R^{\vdim}$}.
\end{equation}
Unless explicitly mentioned otherwise, we will not assume a specific quantization operator in the sequel, and we will state our results only as a function of the quantization error $\ell$.

\subsubsection{The impact on learning}
To motivate the analysis to come, we provide below two examples where the process of quantization can lead to significant challenges in multi-agent learning, even in the simplest case where players observe their full (quantized) payoff vectors.

For concreteness, we will present our examples in the context of the well-known \acdef{EW} algorithm \cite{Vov90,LW94,ACBFS95} which, in our setting, can be written as
\begin{equation}
\label{eq:EW}
\tag{EW}
\begin{aligned}
\state_{\play\pure_{\play},\run}
	&\propto \exp(\dstate_{\play\pure_{\play},\run})
	\\
\dstate_{\play,\run+1}
	&= \dstate_{\play,\run} + \step_{\run} \signal_{\play,\run}
\end{aligned}
\end{equation}
where
\begin{enumerate*}
[(\itshape i\hspace*{1pt}\upshape)]
\item
$\state_{\play,\run} \in \strats_{\play}$ is the mixed strategy of player $\play\in\players$ at the $\run$-th stage of the process;
\item
$\signal_{\play,\run}$ is an approximation of the player's mixed payoff vector $\payv_{\play}(\state_{\run})$ which we discuss in detail below;
\item
$\dstate_{\play,\run} \in \dspace_{\play}$ is an auxiliary ``score vector'' that aggregates payoff information (so $\dstate_{\play\pure_{\play},\run}$ indicates the propensity of player $\play$ to employ the pure strategy $\pure_{\play}\in\pures_{\play}$);
and
\item
$\step_{\run} > 0$ is a ``learning rate'' (or step-size) parameter that controls the weight with which new information enters the algorithm.
\end{enumerate*}

With all this in hand,
the examples that follow are intended to highlight two critical issues:
\begin{enumerate*}
[\itshape a\upshape)]
\item
the evolution of \eqref{eq:EW} when $\signal_{\play,\run}$ is obtained by rounding $\payv_{\play}(\state_{\run})$ at different precision cutoffs;
and
\item
the difference between learning in $\fingame$ with quantized feedback versus learning with \emph{non-quantized} feedback in a quantized version $\round\parens{\fingame}$ of the original game.
\end{enumerate*}

\begin{example}[The role of the quantization error]
Consider a two-player common-interest game with $\pures_1 = \braces{a_1, a_2}, \pures_2 = \braces{b_1,b_2}$, and rewards given by the following payoff matrix:
\begin{center}
    \begin{tabularx}{0.45\textwidth} { 
  | >{\centering\arraybackslash}X 
  | >{\centering\arraybackslash}X 
  | >{\centering\arraybackslash}X | }
 \hline
  \emph{Player} $1 / 2$ & $b_1$ & $b_2$ \\
 \hline
 $a_1$ & $99.1$ & $100.9$  \\
 \hline
 $a_2$ & $100.9$  & $99.1$ \\
\hline
\end{tabularx}
\end{center}
Clearly, $(a_1,b_2)$ is a strict \acl{NE} of the game.

We now examine the case where, at each round $\run=\running$, both players observe their quantized mixed payoff vectors $\signal_{\play,\run} = \round\parens{{\payv\parens{\curr}}}$, $\play=1,2$, and subsequently update their strategies according to \eqref{eq:EW}.
The specific quantization schemes we consider are as follows:

\begin{enumerate}
[(\itshape i\hspace*{1pt}\upshape)]
\item
``\emph{Round to closest even away from zero}'' ($\ell = 2$):
this scheme maps $\point$ to the closest even integer and resolves ties by moving away from $0$, \ie $\round(\point) = 2\mathrm{sgn}(\point) \floor{\abs{\point}/2 + 1/2}$.
Then, for any initial mixed strategy profile $\state_{\start} \in \strats$ that assigns positive probability to all actions,
all coordinates of $\payv\parens{\start}$ will lie in the interval $\parens{99.1,100.9}$, so every entry of $\round\parens{\payv\parens{\start}}$ will in turn be equal to $100$.
We thus conclude that all coordinates of $\dstate_{\run}$ will be increased by the same amount in the iterative step $\dstate_{\run} \gets \dstate_{\run+1}$.
Since this constant increase disappears under the normalization step in \eqref{eq:EW}, we readily obtain $\curr = \init$ for all $\run=\running$,
\ie the players' strategy profile remains unchanged for all time in this learning model.%
\footnote{Note here that the precise values of the game are not important:
we would obtain the same result if we replaced $\braces{99.1,100.9}$ with $\braces{99+\eps, 101-\eps}$ for any $\eps>0$ sufficiently small.}

\item
``\emph{Round half away from zero}'' ($\ell = 1$):
as discussed above, this scheme maps $\point$ to the closest integer and resolves ties by moving away from $0$, \ie $\round(\point) = \mathrm{sgn}(\point) \floor{\abs{\point} + 1/2}$. For simplicity, we assume that the learning rate is constant, say $\gamma_\run = 1, \forall \run$.
Now, taking $(\state_{a_1,\start}, \state_{a_2,\start}) = (0.8, 0.2)$ and $(\state_{b_1,\start}, \state_{b_2,\start}) = (0.2, 0.8)$, we readily obtain $\round\parens{\payv_{a_1}\parens{\state_\start}} = \round\parens{\payv_{b_2}\parens{\state_\start}} = 101$, and $\round\parens{\payv_{a_2}\parens{\state_\start}} = \round\parens{\payv_{b_1}\parens{\state_\start}} = 99$.\\
As a result, the corresponding score differences for $\run = \start$ satisfy: $\dstate_{a_1,\run+1} - \dstate_{a_2,\run+1}
	> \dstate_{a_1,\run} - \dstate_{a_2,\run}$ and $\dstate_{b_2,\run+1} - \dstate_{b_1,\run+1}
	> \dstate_{b_2,\run} - \dstate_{b_1,\run}$
from which we readily get $\state_{a_1,\run+1} > \state_{a_1,\run}$ and $\state_{b_2,\run+1} > \state_{b_2,\run}$
Therefore, inductively we have that $\payv_{a_1}\parens{\curr}$ and $\payv_{b_2}\parens{\curr}$ increase as $\run$ grows, while $\payv_{a_2}\parens{\curr}$ and $\payv_{b_1}\parens{\curr}$ decrease.
We thus obtain $\round\parens{\payv_{a_1}\parens{\state_\run}} = \round\parens{\payv_{b_2}\parens{\state_\run}} = 101$, and $\round\parens{\payv_{a_2}\parens{\state_\run}} = \round\parens{\payv_{b_1}\parens{\state_\run}} = 99$
for all $\run$.
Hence:
\begin{equation}
\begin{aligned}
\dstate_{a_1,\run+1} - \dstate_{a_2,\run+1}
	&= \dstate_{a_1,\run} - \dstate_{a_2,\run} +2 \\
	& = \dstate_{a_1,\start} - \dstate_{a_2,\start} + 2\run
\end{aligned}
\end{equation}
from which we readily get
\begin{equation}
\frac{\exp(\dstate_{a_1,\run+1})}{\exp(\dstate_{a_2,\run+1})}
	 = \frac{\exp(\dstate_{a_1,\start})}{\exp(\dstate_{a_2,\start})} \exp(2\run)
\end{equation}
Taking $\run \rightarrow \infty$ we obtain $(\state_{a_1,\run},\state_{a_2,\run})\rightarrow (1,0)$ as $\run \rightarrow \infty$, and, likewise:
\(
(\state_{b_1,\run},\state_{b_2,\run})\rightarrow (0,1)
\)
as $\run \rightarrow \infty$.
We thus conclude that $\curr$ converges to a strict \acl{NE}.
\end{enumerate}

From the above, we see that for two different quantization lengths, the learning process may exhibit a completely different behavior:
in (i) it remains static throughout the execution of the algorithm,
whereas in (ii) $\state_{\run}$ converges to a strict \acl{NE} of the underlying game. As we will see later, there is a threshold value $\ell$ associated with the minimum payoff differences, where this transition is sharp.
\endenv
\end{example}

\begin{example}[Learning with quantized feedback vs. learning in the quantized game]
This example is intended to highlight the difference between learning in $\fingame$ with quantized feedback versus learning with perfect feedback in a quantized version $\round\parens{\fingame}$ of $\fingame$.
As before, suppose there are two players, $1$ and $2$, with action spaces $\pures_1 = \braces{a_1, a_2}$ and $\pures_2 = \braces{b_1,b_2}$ respectively, and let $\round(\point) = \mathrm{sgn}(\point) \cdot \floor{\abs{\point} + 1/2}$.
The payoff matrix of the original game $\fingame$, along with the quantized version of it, is shown below:

\begin{center}
    \begin{tabularx}{0.45\textwidth} { 
  | >{\centering\arraybackslash}X 
  | >{\centering\arraybackslash}X 
  | >{\centering\arraybackslash}X | }
 \hline
  \emph{Player} $1 \backslash 2$ & $b_1$ & $b_2$ \\
 \hline
 $a_1$ & $0.04 \xrightarrow{\round} 0$ & $0.8\xrightarrow{\round} 1$  \\
 \hline
 $a_2$ & $0.8\xrightarrow{\round} 1$  & $0.04\xrightarrow{\round} 0$ \\
\hline
\end{tabularx}
\end{center}

We denote by $\state_{\run}, \dstate_{\run}$ and $\Tilde{\state}_{\run}, \Tilde{\dstate}_{\run}$ the sequences of states generated by \eqref{eq:EW} on $\fingame$ and $\round\parens{\fingame}$ respectively.
Moreover, we assume for concreteness that $\gamma_\run = 1$ for all $\run$, and $\state_\start = \Tilde{\state}_\start$ with $(\state_{a_1,\start}, \state_{a_2,\start}) = (0.6, 0.4)$ and $(\state_{b_1,\start}, \state_{b_2,\start}) = (0.4, 0.6)$. So, the different procedures are as follows:
\begin{enumerate}
[(\itshape i\hspace*{1pt}\upshape)]
\item
``\emph{$\fingame$ with quantized feedback}'': In this setting, players observe $V_{\run} = \round\parens{{\payv\parens{\curr}}}$ at the $\run$-th stage. By the initial conditions, we obtain $V_\start = 0$,
which means that all coordinates of the score vector remain unchanged,
so, inductively, we get
$\dstate_\run = \dstate_\start$
and hence $\state_\run = \state_\start$ for all stages, \ie the learning process does not evolve. 
\item
``\emph{$\round\parens{\fingame}$ without quantization}'': Unlike the previous setting, the players observe the full payoff vector of $\round\parens{\fingame}$, \ie $\signal_{\run} = \mathbb{E}_{\tilde\pure_{\run} \sim \tilde\state_{\run}}\bracks{\round\parens{\payv\parens{\tilde\pure_{\run}}}}$. By the initial conditions, we have $(V_{a_1,\start}, V_{a_2,\start}) = (0.6, 0.4)$ and $(\signal_{b_1,\start}, \signal_{b_2,\start}) = (0.4, 0.6)$. Therefore, the corresponding score differences for $\run = \start$ satisfy: $\Tilde{\dstate}_{a_1,\run+1} - \Tilde{\dstate}_{a_2,\run+1}
	> \Tilde{\dstate}_{a_1,\run} - \Tilde{\dstate}_{a_2,\run}$ and $\Tilde{\dstate}_{b_2,\run+1} - \Tilde{\dstate}_{b_1,\run+1}
	> \Tilde{\dstate}_{b_2,\run} - \Tilde{\dstate}_{b_1,\run}$.
With a similar reasoning as in Example 1(ii), we have: $\state_{a_1,\run+1} > \state_{a_1,\run}$ and $\state_{b_2,\run+1} > \state_{b_2,\run}$.
Iterating over $n$, we get: $(\state_{a_1,\run},\state_{a_2,\run})\rightarrow (1,0)$ and $(\state_{b_1,\run},\state_{b_2,\run})\rightarrow (0,1)$
\ie $\curr$ converges to a strict equilibrium of $\round(\fingame)$.
\end{enumerate}

The above shows a remarkable difference in behavior:
in the case of $\fingame$ with quantized feedback,
players learn nothing beyond their initial state;
by contrast, learning with \emph{perfect} feedback in the quantized game $\round\parens{\fingame}$ converges to the strict \acl{NE} $(a_1,b_2)$.
This serves to highlight the fact that learning with quantized feedback cannot be compared to learning in a quantized game:
the players' end behavior is drastically different in the two cases.
\endenv
\end{example}

\section{The learning model}
\label{sec:model}

We now proceed to describe our general model for learning with quantized feedback;
for ease of reference, we will refer to this scheme as \acdef{FTQL}.

Viewed abstractly, our model is based on the standard \ac{FTRL} template \cite{SSS06} run with quantized (and possibly noisy) payoff observations as follows:
\begin{equation}
\label{eq:FTQL}
\tag{FTQL}
\begin{aligned}
\state_{\play,\run}
	&= \mirror_{\play}\parens{\dstate_{\play,\run}}
	\\
\dstate_{\play,\run+1}
	&= \dstate_{\play,\run} + \step_{\run} \signal_{\play,\run}
\end{aligned}
\end{equation}
In more detail,
the defining elements of \eqref{eq:FTQL} are
\begin{enumerate*}
[(\itshape i\hspace*{1pt}\upshape)]
\item
the approximate payoff vectors $\signal_{\play,\run} \in \dspace_{\play}$ which are reconstructed from the players' payoff observations;
and
\item
the players' ``choice maps'' $\mirror_{\play}\from\dspace_{\play}\to\strats_{\play}$ which determine each player's mixed strategy $\state_{\play,\run} \in \strats_{\play}$ as a function of the ``aggregate payoff'' variables $\dstate_{\play,\run} \in \dspace_{\play}$.
\end{enumerate*}
In the rest of this section, we describe both of these elements in detail;
for a pseudocode implementation of the method, see also \cref{alg:FTQL} below.

\subsubsection{The feedback process}

The vanilla version of \ac{FTRL} assumes that each player $\play\in\players$ observes the full (mixed) payoff vector $\signal_{\play,\run} \gets \payv_{\play}\parens{\curr}$ in order to update their individual score vector $\dstate_{\play,\run}$ at each stage $\run$.
However, in our model, we only assume that players observe a quantized \textendash\ and possibly noisy \textendash\ version of their in-game, realized payoffs.
Specifically, if $\est\pure_{\play,\run}\in\pures_{\play}$ denotes the action (pure strategy) chosen by the $\play$-th player at stage $\run$, we assume that each player receives as feedback the quantized reward
\begin{equation}
\label{eq:feedback}
\est\pay_{\play,\run}
	= \round\bracks{\pay_{\play}(\est\pure_{\play,\run};\est\pure_{-\play,\run}) + \snoise_{\play,\run}}
\end{equation}
where
$\round$ is a quantization operator with gap $\ell$ (\cf \cref{sec:preliminaries})
and
$\snoise_{\play,\run} \in \R$, $\run=\running$, is a random, zero-mean error capturing all sources of uncertainty in the process.
Specifically, letting $\filter_{\run}$ denote the history (natural filtration) of $\state_{\run}$, we will make the following statistical assumptions for $\snoise_{\run}$:
\begin{subequations}
\label{eq:noise}
\begin{align}
\label{eq:zeromean}
\text{Zero-mean:}
	&\;\;
\exof{\snoise_{\play,\run} \given \filter_{\run}}
	= 0
	&&
	\\
\label{eq:variance}
\text{Finite variance:}
	&\;\;
\exof{\abs{\snoise_{\play,\run}}^{2} \given \filter_{\run}}
	\leq \noisevar
	&&
\end{align}
\end{subequations}
\ie $\snoise_{\run}$ is an $L^{2}$-bounded martingale difference sequence relative to the history of play up to stage $\run$ (inclusive).

To reconstruct their payoff vectors from the quantized feedback model \eqref{eq:feedback}, we further assume that players employ the \acdef{IWE}
\begin{equation}
\label{eq:IWE}
\tag{IWE}
\signal_{\play\pure_{\play},\run}
    = \frac{\oneof{\est\pure_{\play,\run} = \pure_{\play}}}{\est\state_{\play\pure_{\play},\run}}
    \est\pay_{\play,\run}
\end{equation}
where
\(
\est{\state_{\play\pure_{\play},\run}}
    = (1-\mix_{\run})\state_{\play\pure_{\play},\run} + \mix_{\run}/ \abs{\pures_{\play}}
\)
denotes the probability with which player $\play$ selects action $\pure_{\play} \in \pures_{\play}$ at stage $\run$ given the mixed strategy profile $\state_{\play,\run} \in \strats_{\play}$ and an ``\emph{explicit exploration}'' parameter $\mix_{\run} > 0$.
The role of this parameter will be discussed in detail in the next section.

\subsubsection{The players' choice maps}

As mentioned above, the second defining element of \eqref{eq:FTQL} is the players' choice map $\mirror_{\play}\from\dspace_{\play} \to \strats_{\play}$ whose role is to translate the ``aggregate score'' vectors $\dstate_{\play,\run}\in\dspace_{\play}$ into mixed strategies $\state_{\play,\run} = \mirror_{\play}(\dstate_{\play,\run}) \in \dspace_{\play}$.
This choice map is in turn defined as a regularized best response of the form
\begin{equation}
\label{eq:mirror}
\mirror_{\play}(\dpoint_{\play})
	= \argmax\nolimits_{\strat_{\play} \in \strats_{\play}}
		\braces{\braket{\dpoint_{\play}}{\strat_{\play}} - \hreg_{\play}(\strat_{\play})}
\end{equation}
where $\hreg_{\play}\from\strats_{\play} \to \R$ denotes the method's namesake ``regularizer function''.

For concreteness, we will focus on a class of decomposable regularizers of the form $\hreg_{\play}(\strat_{\play}) = \sum_{\pure_{\play}\in\pures_{\play}} \theta_{\play}(\strat_{\play\pure_{\play}})$ where the ``kernel function'' $\theta_{\play}\from[0,1]\to\R$ is
\begin{enumerate*}
[(\itshape i\hspace*{1pt}\upshape)]
\item
continuous on $[0,1]$;
\item
twice differentiable on $(0,1]$;
and
\item
strongly convex, \ie $\inf_{t\in(0,1]} \theta_{\play}''(z) > 0$.
\end{enumerate*}
Two standard examples of such functions are:
\begin{enumerate}
[left=0pt]
\item
The \emph{entropic regularizer} $\theta_{\play}(z) = z\log z$:
a standard calculation shows that the induced choice map is $\mirror_{\play}(\dpoint_{\play}) = (\exp(\dpoint_{\play\pure_{\play}}))_{\pure_{\play}\in\pures_{\play}} \big/ \sum_{\pure_{\play}\in\pures_{\play}} \exp(\dpoint_{\play\pure_{\play}})$ which leads to the \acl{EW} update template \eqref{eq:EW} of \cref{sec:preliminaries}.

\item
The \emph{Euclidean regularizer} $\theta_{\play}(z) = z^{2}/2$:
trivially, the induced choice map is the closest point projection $\mirror_{\play}(\dpoint_{\play}) = \argmin_{\strat_{\play}\in\strats_{\play}} \twonorm{\dpoint_{\play} - \strat_{\play}}$, and the induced scheme is the projection dynamics \cite{Fri91,Hop99a}.
\end{enumerate}

An important distinction between these regularizers is that $\theta_{\play}'(0^{+}) = -\infty$ for the entropic regularizer while $\theta_{\play}'(0^{+})$ is finite for the Euclidean one.
Regularizers that have the former behavior are called \emph{steep} and have the property that the induced mirror map is interior-valued;
regularizers with the latter behavior are called \emph{non-steep} and have surjective mirror maps \cite{MS16}.
This behavior is captured by the \emph{rate function}
\begin{equation}
\label{eq:ratefn}
\phi_{\play}(\dpoint)
	= \begin{cases}
		0
			&\quad
			\text{if $\dpoint \leq \theta_{\play}'(0^{+})$}
		\\
		1
			&\quad
			\text{if $\dpoint \geq \theta_{\play}'(1^{-})$}
		\\
		(\theta_{\play}')^{-1}(\dpoint)
			&\quad
			\text{otherwise}
   \end{cases}
\end{equation}
As we shall see below, this rate function plays a crucial role in determining the rate of convergence of \eqref{eq:FTQL}.

\begin{algorithm}[tbp]
\begin{algorithmic}[1]
\State
	\textbf{Initialize:} $\dstate_\start$
\For{$\run = \start,\afterstart,\dots$}
\State
	$\state_{\play,\run} \leftarrow \mirror_{\play}\parens{\dstate_{\play,\run}}$
\State
	Update sampling strategy: $\;\est{\state_{\play\pure_{\play},\run}} \gets (1-\mix_{\run})\state_{\play\pure_{\play},\run} + \frac{\mix_{\run}}{\abs{\pures_{\play}}}$
\State
	Sample $\hat{\pure}_{\run} \sim \hat{\state}_{\run}$
\State
	Observe realized payoff: $\;\est\pay_{\play,\run} \gets \round \parens{\pay_{\play}\parens{\pure_{\play};\pure_{-\play}} + \snoise_{\play,\run}}$
\State
	Estimate payoff vector through \eqref{eq:IWE}: $$\signal_{\play\pure_{\play},\run} \gets {\oneof{\hat{\pure}_{\play,\run} = \pure_{\play}} \over \est{\state_{\play\pure_{\play},\run}}}\est\pay_{\play,\run}$$
\State
	Update score vectors: $\dstate_{\play,\run+1} \gets \dstate_{\play,\run}+\step_{\run} \signal_{\play,\run}$
\EndFor
\end{algorithmic}
\caption{\Acf{FTQL}}
\label{alg:FTQL}
\end{algorithm}

\section{Analysis and results}
\label{sec:results}

We are now in a position to proceed with the convergence analysis of the quantized learning scheme \eqref{eq:FTQL}.
The first thing to note is that a finite game may admit several Nash equilibria \textendash\ an odd number generically \textendash\ so it is not reasonable to expect a global convergence result that applies to all games.
For this reason, we will focus below on states that are \emph{locally} stable and attracting:

\begin{definition}
\label{def:stable}
Let $\state_{\run}$, $\run=\running$, be the sequence of mixed strategy profiles generated by \eqref{eq:FTQL}.
We then say that $\eq\in\strats$ is:
\begin{enumerate}
[left=.5em,itemsep=0pt]

\item
\emph{Stochastically stable} if, for every confidence level $\conf>0$ and every neighborhood $\nhd$ of $\eq$ in $\strats$, there exists a neighborhood $\nhd_{\start}$ of $\eq$ in $\strats$ such that
\begin{equation}
\label{eq:stable}
\probof{\state_{\run}\in\nhd \; \text{for all $\run$} \given \state_{\start} \in \nhd_{\start}}
	\geq 1-\conf.
\end{equation}

\item
\emph{Attracting} if, for every confidence level $\conf>0$, there exists a neighborhood $\nhd_{\start}$ of $\eq$ in $\strats$ such that
\begin{equation}
\label{eq:attract}
\probof{\state_{\run}\to\eq \; \text{as $\run\to\infty$} \given \state_{\start} \in \nhd_{\start}}
	\geq 1-\conf.
\end{equation}

\item
\emph{Stochastically asymptotically stable} if it is stochastically stable and attracting.
\end{enumerate}
\end{definition}

Informally, the above states that $\eq$ is stochastically stable if every trajectory $\state_{\run}$ of \eqref{eq:FTQL} that starts sufficiently close to $\eq$ remains nearby with arbitrarily high probability;
in addition, if $\state_{\run}$ converges to $\eq$ as well, then $\eq$ is stochastically \emph{asymptotically} stable \cite{Kha12,Rob12}.
On that account, states that are (stochastically) asymptotically stable under \eqref{eq:FTQL} are the only states that can be considered as viable, stable outcomes of the learning process.

In the context of \ac{FTRL} with perfect, \emph{non-quantized} payoff observations, it is known that a state is stochastically asymptotically stable \emph{if and only if} it is a \emph{strict} \aclp{NE} of $\fingame$ \cite{GVM21}.
With this in mind, and given that the advent of quantization can only worsen the attraction properties of any given point (\cf the relevant discussion in \cref{sec:preliminaries}), we will exclusively focus below on the asymptotic stability and attraction properties of strict \aclp{NE} under \eqref{eq:FTQL}.

In this regard, our main result can be summarized along the following two axes:
\begin{enumerate}
\item
If the quantization error $\ell$ is smaller than a threshold value $\thres$ that depends only on the underlying game, every strict \acl{NE} of $\fingame$ is stochastically asymptotically stable under \eqref{eq:FTQL}.
\item
Conditioned on the above, convergence to a strict equilibrium $\eq\in\strats$ occurs at a rate of $\onenorm{\state_{\run} - \eq}
	\leq \phi\parens*{-\Theta\parens*{\insum_{\runalt=\start}^{\run} \step_{\runalt}}}$, where $\phi$ is the rate function \eqref{eq:ratefn}.
\end{enumerate}

The idea of our proof is to find a set of suitable initial conditions for the quantized version of $\payv\parens{\curr}$ to remain in the interior of the \textit{normal cone} $\ncone\parens{\eq}$ of $\strats$ at $\eq$ throughout the execution of the algorithm. For this, we need to delve into the geometry of $\ncone\parens{\eq}$ and find the limitations in the quantization legnth $\ell$ that guarantee that $\curr$ will be contained in the desired region.

We start with the following lemma that gives a specific description of the \textit{normal cone} at a vertex $\eq$ of the polytope $\strats$.

\begin{restatable}{lemma}{cone}
\label{lem:cone}
Let $\eq$ be of the form $\parens{e_{1\pureq_{1}},\dots,e_{\nPlayers\pureq_{\nPlayers}}}$, where $e_{\play\pureq_{\play}} \in \R^{\abs{\pures_{\play}}}$ a standard basis vector. Then the normal cone of $\strats$ at $\eq$ can be expressed as: 
\begin{equation}
\ncone\parens{\eq} = \braces{w \in \dspace: w_{\play\pure_{\play}} - w_{\play\pureq_{\play}} \leq 0, \forall \play \in \players, \pure_{\play} \in \pures_{\play}}
\end{equation}
\end{restatable}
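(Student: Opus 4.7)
The plan is to prove \cref{lem:cone} by unpacking the definition of the normal cone and exploiting the product structure of $\strats$. Recall that for a convex set $\points \subseteq \R^{\vdim}$ and a point $\basealt \in \points$, the normal cone is
\[
\ncone(\basealt) = \braces{w \in \R^{\vdim} : \braket{w}{\point - \basealt} \leq 0 \text{ for all } \point \in \points},
\]
so the first step is simply to rewrite the defining inequality under the identification $\dspace = \prod_{\play} \dspace_{\play}$, $\strats = \prod_{\play} \strats_{\play}$.

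Because $\braket{w}{\strat - \eq} = \sum_{\play} \braket{w_{\play}}{\strat_{\play} - \eq_{\play}}$ and each $\strat_{\play}$ varies independently over $\strats_{\play}$, the inequality $\braket{w}{\strat - \eq} \leq 0$ for all $\strat \in \strats$ is equivalent to $\braket{w_{\play}}{\strat_{\play} - \eq_{\play}} \leq 0$ for every $\play \in \players$ and every $\strat_{\play} \in \strats_{\play}$ (the "only if" direction follows by varying one coordinate at a time while holding the others at $\eq_{-\play}$; the "if" direction by summing). Thus $\ncone(\eq)$ decomposes as a product of per-player normal cones, and it suffices to describe $\ncone_{\strats_{\play}}(e_{\play\pureq_{\play}})$ for each $\play$ separately.

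Next I would handle the single-simplex case. Writing out the inner product, $\braket{w_{\play}}{\strat_{\play} - e_{\play\pureq_{\play}}} = \sum_{\pure_{\play}} w_{\play\pure_{\play}} \strat_{\play\pure_{\play}} - w_{\play\pureq_{\play}}$. The maximum of a linear functional over the simplex $\strats_{\play}$ is attained at a vertex, so
\[
\max_{\strat_{\play} \in \strats_{\play}} \braket{w_{\play}}{\strat_{\play} - e_{\play\pureq_{\play}}} = \max_{\pure_{\play} \in \pures_{\play}} w_{\play\pure_{\play}} - w_{\play\pureq_{\play}}.
\]
This quantity is $\leq 0$ iff $w_{\play\pure_{\play}} - w_{\play\pureq_{\play}} \leq 0$ for every $\pure_{\play} \in \pures_{\play}$, which is exactly the stated inequality; the converse direction is immediate, since if the per-coordinate bounds hold, then $\sum_{\pure_{\play}} w_{\play\pure_{\play}} \strat_{\play\pure_{\play}} \leq w_{\play\pureq_{\play}} \sum_{\pure_{\play}} \strat_{\play\pure_{\play}} = w_{\play\pureq_{\play}}$ using that $\strat_{\play}$ is a probability vector.

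Combining the two steps yields the claim. There is no real obstacle here; the only point deserving care is justifying the decomposition across players (making sure to test at strategy profiles that fix all but one component at the corresponding $\eq_{\play}$), after which the argument is just the standard identification of normals to a simplex at a vertex with "coordinates dominated by the active one."
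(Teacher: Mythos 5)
Your proof is correct, but it takes a genuinely different route from the paper. The paper works through the polyhedral geometry of $\strats$: it identifies the vertices adjacent to $\eq$, asserts that the tangent cone $\tcone(\eq)$ is the cone generated by the corresponding edge directions $e_{\play\pure_{\play}} - e_{\play\pureq_{\play}}$, and then computes the normal cone as the polar of that finitely generated cone. You instead work directly from the characterization $\ncone(\eq) = \braces{w : \braket{w}{\strat - \eq} \leq 0 \ \forall \strat\in\strats}$, decompose the inequality across players using the product structure (testing profiles of the form $(\strat_{\play};\eq_{-\play})$ so the cross terms vanish), and then reduce the single-simplex case to the fact that a linear functional on a simplex attains its maximum at a vertex. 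Both arguments ultimately reduce the test set to the edge directions at $\eq$, but yours is more elementary and self-contained: it avoids having to identify the adjacent vertices of the product polytope and to justify that the tangent cone at a vertex is generated by its edges, a polyhedral fact the paper invokes without further use elsewhere (the subsequent \cref{lem:polar} only needs the coordinate description you derive). What the paper's route buys in exchange is an explicit description of $\tcone(\eq)$ and of the neighboring vertices, which can be useful context but is not required for the results that follow.
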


\begin{Proof}
    We have that $\strats = \braces{\strat \in \R^{\abs{\pures}} : \sum_{\pure_{\play} \in \pures_{\play}}\strat_{\play\pure_{\play}} = 1, \strat_{\play\pure_{\play}} \geq 0, \forall \pure_{\play} \in \pures_{\play}, \play \in \players}$, for $\abs{\pures} = \abs{\pures_1}+\dots+\abs{\pures_\nPlayers}$. We can equivalently write it in standard form, as:
\begin{equation}
\strats
	= \setdef
	{\strat \in \R^{\abs{\pures}}}
	{C\strat = e, \strat_{\play\pure_{\play}} \geq 0, \forall \pure_{\play} \in \pures_{\play}, \play \in \players}
\end{equation}
where $C$ is a $\nPlayers \times \parens{\abs{\pures_1}+\dots+\abs{\pures_\nPlayers}}$ matrix whose $i$-th row is $c_{\play}^T = \parens{0,\dots,0,1,\dots,1,0,\dots,0}$, with ones in positions $\parens{\abs{\pures_1}+\dots +\abs{\pures_{\play-1}}+1},\dots, \parens{\abs{\pures_1}+\dots +\abs{\pures_{\play}}}$. Then, every vertex $\overline{\strat}$ of $\strats$ is of the form: $\overline{\strat}_{\play\overline{\pure}_{\play}} = 1$ for some $\overline{\pure}_{\play} \in \pures_{\play}$ and $\overline{\strat}_{\play\pure_{\play}} = 0, \forall \pure_{\play} \neq \overline{\pure}_{\play} \in \pures_{\play}, \forall \play \in \players$. Hence, $\eq$ is an extreme point of the bounded polytope $\strats$ and the set of adjacent vertices of $\eq$ is the set $\mathcal{Z} = \braces{\eq - e_{\play\pureq_{\play}}+e_{\play\pure_{\play}}: \pure_{\play} \in \pures_{\play}, \play \in \players}$. 
Now, let $\mathcal{C} = \braces{w: \inner{w}{z - \eq} \leq 0, \forall z \in \mathcal{Z}}$. 
The tangent cone of $\strats$ at $\eq$ equals to the closure of the cone of feasible directions at $\eq$, and since $\strats$ is a convex polytope, we get:
\begin{equation}
	\tcone\parens{\eq} = \text{cone}\parens{\braces{z - \eq: z \in \mathcal{Z}}}
\end{equation}
Since $\ncone\parens{\eq} = \parens{\tcone\parens{\eq}}^{\circ}:= \braces{w: \inner{w}{\strat} \leq 0, \forall \strat \in \tcone\parens{\eq}}$, it remains to show that $\mathcal{C} = \ncone\parens{\eq}$. Clearly, $\ncone\parens{\eq} \subset \mathcal{C}$, since $z - \eq \in \tcone\parens{\eq}, \forall z \in \mathcal{Z}$. For the opposite direction, take $w \in \mathcal{C}$. Then, for $\strat \in \tcone\parens{\eq}$, i.e. $\strat = \sum_{j = 1}^\runalt \lambda_j\parens{z_j - \eq}$ for $z_j \in \mathcal{Z}, \lambda_j \geq 0$, we have $\inner{w}{\strat} \leq 0$, since $\inner{w}{z_j - \eq} \leq 0, \forall z_j \in \mathcal{Z}$. Therefore, we get $w \in \ncone\parens{\eq} \implies \mathcal{C} \subset \ncone\parens{\eq}$, and the result follows.
\end{Proof}

Given this representation of the normal cone at the vertices of $\strats$, we can derive several geometric properties of strict \aclp{NE}. Informally, the next lemma states that the payoff vector $\payv\parens{\eq}$ at a strict equilibrium $\eq$  belongs to the interior of $\ncone\parens{\eq}$, and also gives the distance from the cone's boundary.

\begin{restatable}{lemma}{polar}
\label{lem:polar}
Let $\eq = \parens{\pureq_{1},\dots,\pureq_{\nPlayers}} \in \strats$ be a strict \acl{NE} and
let $\mindiff$ be defined as per \eqref{eq:mindiff}.
\begin{enumerate}[(a)]
    \item
    If $\ell \leq \mindiff$, then $\ball \parens{\payv\parens{\eq}, {\ell \over 2}} \subseteq \ncone\parens{\eq}$, where $\ball$ is with respect to $\supnorm{\cdot}$
    \item
    If $\ell \leq {\mindiff \over m}$ for $m \in \N$ and $d = \mindiff-m \ell$, then for any $w \in \ball \parens{\payv\parens{\eq}, {d \over 2}}$, we have: $w_{\play \pure_{\play}} - w_{\play \pureq_{\play}} + m \ell \leq 0$, for any $\pure_{\play} \in \pures_{\play}, \play \in \players$.
\end{enumerate}
\end{restatable}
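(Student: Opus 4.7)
The plan is to leverage \cref{lem:cone} to convert the geometric question ``is $w \in \ncone(\eq)$?'' into the purely algebraic one ``do the coordinate differences $w_{\play\pure_\play} - w_{\play\pureq_\play}$ have the right sign?'' Once that is done, both parts reduce to an elementary application of the triangle inequality against the sup-norm ball, with $\mindiff$ playing the role of the safety margin provided by the strict equilibrium condition.

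First, I would unpack what $\mindiff$ means. Since $\eq$ is strict, we have $\pay_\play(\pureq_\play;\eq_{-\play}) > \pay_\play(\pure_\play;\eq_{-\play})$ for every $\play\in\players$ and every $\pure_\play\neq\pureq_\play$; equivalently, $\payv_{\play\pureq_\play}(\eq) - \payv_{\play\pure_\play}(\eq) > 0$. The definition \eqref{eq:mindiff} should then simply pick out the smallest such gap, so that
\begin{equation*}
\payv_{\play\pure_\play}(\eq) - \payv_{\play\pureq_\play}(\eq) \leq -\mindiff
\qquad \text{for all } \play\in\players,\ \pure_\play\in\pures_\play,
\end{equation*}
with equality possible only when $\pure_\play = \pureq_\play$ (in which case the left-hand side is $0$, which is automatically $\leq -\mindiff + \mindiff$).

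For part (a), I would take any $w \in \ball(\payv(\eq),\ell/2)$ in the $\supnorm{\cdot}$ sense, so that $\abs{w_{\play\pure_\play} - \payv_{\play\pure_\play}(\eq)} \leq \ell/2$ coordinatewise. Adding and subtracting $\payv_{\play\pure_\play}(\eq)$ and $\payv_{\play\pureq_\play}(\eq)$ and using the triangle inequality gives
\begin{equation*}
w_{\play\pure_\play} - w_{\play\pureq_\play}
\;\leq\; \payv_{\play\pure_\play}(\eq) - \payv_{\play\pureq_\play}(\eq) + \ell
\;\leq\; -\mindiff + \ell \;\leq\; 0
\end{equation*}
under the hypothesis $\ell \leq \mindiff$. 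By \cref{lem:cone} this is exactly the defining inequality of $\ncone(\eq)$, so $w\in\ncone(\eq)$, as claimed.

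Part (b) is obtained in exactly the same way but with tighter book-keeping: for $w\in\ball(\payv(\eq),d/2)$ the triangle inequality now yields
\begin{equation*}
w_{\play\pure_\play} - w_{\play\pureq_\play}
\;\leq\; \payv_{\play\pure_\play}(\eq) - \payv_{\play\pureq_\play}(\eq) + d
\;\leq\; -\mindiff + d \;=\; -m\ell,
\end{equation*}
which rearranges to $w_{\play\pure_\play} - w_{\play\pureq_\play} + m\ell \leq 0$. The only subtlety to double-check is that $d = \mindiff - m\ell \geq 0$ under the assumption $\ell \leq \mindiff/m$, so that the ball $\ball(\payv(\eq),d/2)$ is nonempty and the chain of inequalities is legitimate. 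I do not expect any real obstacle here: both parts hinge entirely on the representation provided by \cref{lem:cone} together with the very definition of $\mindiff$, so the hardest aspect is merely making sure that the indexing and the $\supnorm{\cdot}$ geometry are handled cleanly across all pairs $(\play,\pure_\play)$ simultaneously.
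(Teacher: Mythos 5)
Your proposal is correct and follows essentially the same route as the paper: reduce membership in $\ncone(\eq)$ to the coordinatewise inequalities of \cref{lem:cone}, then bound $w_{\play\pure_\play} - w_{\play\pureq_\play}$ via the triangle inequality and the definition of $\mindiff$. The only (cosmetic) difference is in part (b), where the paper shifts $w$ by $\pm m\ell/2$ to produce a point of $\ball\parens{\payv(\eq),\mindiff/2}$ and invokes part (a), whereas you carry out the same bound directly, obtaining $w_{\play\pure_\play} - w_{\play\pureq_\play} \leq -\mindiff + d = -m\ell$ for $\pure_\play \neq \pureq_\play$.
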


\begin{Proof}
\textit{(a)} Let $w \in \ball \parens{\payv\parens{\eq}, {\ell \over 2}}$. We have: $\abs{w_{\play\pure_{\play}} - \payv_{\play\pure_{\play}}\parens{\eq}} \leq {\ell \over 2}, \forall \pure_{\play} \in \pures_{\play}, \play \in \players$
Then, for any $z_{\play\pure_{\play}} \in \mathcal{Z}$, we have:
\begin{align*}
w_{\play\pure_{\play}} - w_{\play\pureq_{\play}} & \leq \payv_{\play\pure_{\play}}\parens{\eq} - \payv_{\play\pureq_{\play}}\parens{\eq} + \ell \notag\\
& = \pay_{\play}\parens{\pure_{\play};\pureq_{-\play}} - \pay_{\play}\parens{\pureq_{\play};\pureq_{-\play}} +\ell \notag\\
&\leq 0 
\end{align*}
from which we conclude that $w \in \ncone\parens{\eq}$. \\[2mm]
\textit{(b)} Let $w \in \ball \parens{\payv\parens{\eq}, {\ell \over 2}}$ and $\Tilde{w}$ defined as follows:
\begin{equation}
\Tilde{w}_{\play\pure_{\play}} = 
	\begin{cases}
	w_{\play\pure_{\play}} - {m\ell \over 2}
		&\quad
		\text{if $\pure_{\play} = \pureq_{\play}$}
	\\
	w_{\play\pure_{\play}} + {m\ell \over 2}
		&\quad
		\text{otherwise}
	\end{cases}
\end{equation}
Then, $\supnorm{\Tilde{w}-w} \leq {m\ell \over 2}$, and hence we have:
\begin{equation}
\supnorm{\Tilde{w}-\payv\parens{\eq}} \leq \supnorm{\Tilde{w}-w}+ \supnorm{w-\payv\parens{\eq}} \leq {\mindiff \over 2}
\end{equation}
from which we get that $\Tilde{w} \in \ball \parens{\payv\parens{\eq}, {\mindiff \over 2}}$, \ie $\Tilde{w} \in \ncone\parens{\eq}$ due to part \textit{(a)}. Therefore, we conclude that for any $\play \in \players, \pure_{\play} \in \pures_{\play}$: 
\begin{equation}
\Tilde{w}_{\play\pure_{\play}} - \Tilde{w}_{\play\pureq_{\play}} \leq 0 \implies w_{\play \pure_{\play}} - w_{\play \pureq_{\play}} + m \ell \leq 0
\end{equation}
\end{Proof}

Now, we are ready to state and prove our main theorem.
\begin{restatable}{theorem}{main}
\label{thm:main}
Let $\eq = (\pureq_{1},\dotsc,\pureq_{\nPlayers})$ be a strict \acl{NE} of $\fingame$ and let
\begin{equation}
\label{eq:mindiff}
\mindiff
	= \min\nolimits_{\play\in\players}
	\min\nolimits_{\pure_{\play}\in\pures_{\play}\setminus\{\pureq_{\play}\}}
		\braces{\pay_{\play}\parens{\pureq_{\play};\pureq_{-\play}} - \pay_{\play}\parens{\pure_{\play};\pureq_{-\play}}}
\end{equation}
denote the minimum payoff difference incurred by a unilateral off-equilibrium deviation.
Assume further that \eqref{eq:FTQL} is run with quantization error $\ell < \mindiff/3$ and step-size and exploration parameters such that
\begin{equation}
\label{eq:params}
\sum_{\run=\start}^{\infty} \step_{\run}
	= \infty,
	\;\;
\sum_{\run=\start}^{\infty} \step_{\run} \mix_{\run}
	< \infty
	\;\;
	\text{and}
	\;\;
\sum_{\run=\start}^{\infty} \frac{\step_{\run}^{2}}{\mix_{\run}^{2}}
	< \infty.
\end{equation}
Then $\eq$ is stochastically asymptotically stable and, for all trajectories converging to $\eq$, we have
\begin{equation}
\label{eq:rate}
\onenorm{\state_{\play,\run} - \eq_{\play}}
	\leq 2 \sum_{\pure_{\play} \neq \pureq_{\play}} \phi_{\play}\parens*{-c \tau_{\run} +o\parens{\tau_{\run}}}
\end{equation}
where $\tau_{\run} = \sum_{\runalt=\start}^{\run} \step_{\runalt}$ and $\const \in (0,\mindiff - 3\ell)$.
\end{restatable}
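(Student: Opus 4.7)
The plan is to show that, with high probability, the score trajectory $\dstate_{\run}$ drifts into the interior of $\ncone(\eq)$ and never leaves a suitable neighborhood of $\eq$, so that the choice map $\mirror$ forces $\state_{\run}\to\eq$ at the claimed rate. The geometric backbone is \cref{lem:polar}(b) applied with $m = 3$ (permissible since $\ell < \mindiff/3$): every $w \in \ball\parens*{\payv(\eq), (\mindiff - 3\ell)/2}$ satisfies $w_{\play\pureq_{\play}} - w_{\play\pure_{\play}} \geq 3\ell$ for all $\play\in\players$ and $\pure_{\play}\neq\pureq_{\play}$, and by continuity of $\payv$ the same strict separation persists on a strategy-space neighborhood $U$ of $\eq$, providing the cushion that absorbs quantization.

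Inside $U$, the expected increment of \eqref{eq:FTQL} can be read off directly from the \eqref{eq:IWE}: a standard calculation gives
\[
\exof{\signal_{\play\pure_{\play},\run} \given \filter_{\run}} = \exof{\round\parens*{\pay_{\play}(\pure_{\play};\est\pure_{-\play,\run}) + \snoise_{\play,\run}}\given \filter_{\run}},
\]
which by \eqref{eq:quant-error} lies within $\ell/2$ of $\payv_{\play\pure_{\play}}(\est\state_{\run})$, with an additional $O(\mix_{\run})$ correction for replacing $\est\state_{\run}$ by $\state_{\run}$. Consequently, on $\braces{\state_{\run}\in U}$,
\[
\exof{\signal_{\play\pureq_{\play},\run} - \signal_{\play\pure_{\play},\run} \given \filter_{\run}} \geq \mindiff - \ell - O(\mix_{\run}),
\]
so any $c\in(0,\mindiff-3\ell)$ is an eventual conditional lower bound once $\mix_{\run}$ is small and $U$ is tight. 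The martingale residual $M_{\play,\run} = \sum_{\runalt \leq \run}\step_{\runalt}\parens*{\signal_{\play,\runalt} - \exof{\signal_{\play,\runalt}\given\filter_{\runalt}}}$ has conditional second moments of order $1/\mix_{\runalt}^{2}$, so the summability $\sum_{\runalt}\step_{\runalt}^{2}/\mix_{\runalt}^{2} < \infty$ together with Doob's maximal inequality yields almost-sure convergence of $M_{\play,\run}$ and arbitrarily small late-time oscillation, each with probability as close to $1$ as desired.

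I then combine these pieces through a stochastic Lyapunov / stopping-time argument, introducing the exit time $\sigma = \inf\braces{\run : \state_{\run}\notin U}$. For any target confidence $\conf > 0$, I pick a smaller neighborhood $V \subset U$ of $\eq$ and a starting index $\nRuns_{0}$ large enough that, with probability $\geq 1-\conf$, both the tail bias $\sum_{\runalt \geq \nRuns_{0}}\step_{\runalt}\mix_{\runalt}$ and the oscillation of $M_{\play,\runalt}$ past $\nRuns_{0}$ fall below a chosen threshold. On this good event, if $\state_{\nRuns_{0}} \in V$, the positive drift of $\dstate_{\play\pureq_{\play},\run}-\dstate_{\play\pure_{\play},\run}$ dominates the residual noise and keeps $\state_{\run}\in U$ for all $\run\geq \nRuns_{0}$, so $\sigma = \infty$, yielding stochastic stability and convergence simultaneously. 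Quantitatively, on $\braces{\sigma = \infty}$ one has
\[
\dstate_{\play\pureq_{\play},\run} - \dstate_{\play\pure_{\play},\run} \geq c\,\tau_{\run} + o(\tau_{\run}),
\]
so the decomposability of $\mirror$ forces $\state_{\play\pure_{\play},\run} \leq \phi_{\play}(-c\tau_{\run}+o(\tau_{\run}))$ for every $\pure_{\play}\neq\pureq_{\play}$; summing and using $\onenorm{\state_{\play,\run}-\eq_{\play}} = 2\sum_{\pure_{\play}\neq\pureq_{\play}}\state_{\play\pure_{\play},\run}$ yields the stated rate.

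The main obstacle is the stopping-time / concentration step. Quantization injects a persistent, state-dependent bias into the signal that cannot be averaged out the way ordinary zero-mean noise can, while the \eqref{eq:IWE} amplifies per-step variance by a factor $\sim 1/\mix_{\run}$. The argument closes only because \cref{lem:polar}(b) supplies a $3\ell$ cushion in the normal cone that absorbs both the quantization gap \emph{and} the residual stochastic noise; guaranteeing that this cushion is not eroded \emph{uniformly over infinite time}, rather than just for a fixed horizon, is what makes the concentration estimate delicate and ties the sharp threshold $\ell < \mindiff/3$ directly to the geometry of strict equilibria.
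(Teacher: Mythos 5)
Your proposal follows essentially the same route as the paper's proof: the same normal-cone geometry (\cref{lem:polar} with $m=3$, plus continuity of $\payv$, to secure a uniform drift bound $\payv_{\play\pure_{\play}}(\strat)-\payv_{\play\pureq_{\play}}(\strat)+3\ell\leq -c$ with $c\in(0,\mindiff-3\ell)$ on a neighborhood of $\eq$), the same decomposition of the \eqref{eq:IWE} signal into a quantization-shifted conditional mean, an $O(\mix_{\run})$ sampling bias and a martingale residual with conditional second moment $O(1/\mix_{\run}^{2})$, Doob's maximal inequality under $\sum_{\run}\step_{\run}^{2}/\mix_{\run}^{2}<\infty$ and $\sum_{\run}\step_{\run}\mix_{\run}<\infty$, and the same endgame: $\dstate_{\play\pure_{\play},\run}-\dstate_{\play\pureq_{\play},\run}\leq -c\tau_{\run}+o(\tau_{\run})$, translated through $\theta_{\play}'$ and the rate function $\phi_{\play}$, with the factor $2$ coming from $\onenorm{\state_{\play,\run}-\eq_{\play}}=2\sum_{\pure_{\play}\neq\pureq_{\play}}\state_{\play\pure_{\play},\run}$. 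A minor difference in bookkeeping: you compare $\exof{\signal_{\play,\run}\given\filter_{\run}}$ directly with $\payv_{\play}(\est\state_{\run})$, paying only $\ell$ per coordinate difference, whereas the paper's decomposition (\cref{claim:ineqs}) pays $3\ell$; both are compatible with the stated threshold and constant.

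One step does need repair: for \emph{stochastic stability} you condition on $\state_{\nRuns_{0}}\in V$ at a large deterministic time $\nRuns_{0}$, chosen so that the tail bias and the post-$\nRuns_{0}$ oscillation of the martingale are below a small threshold. \cref{def:stable} requires conditioning on the \emph{initial} state ($\state_{\start}\in\nhd_{\start}$) and control of the trajectory for \emph{all} $\run\geq\start$; as written, your argument says nothing about the stages $\run<\nRuns_{0}$, nor about why starting near $\eq$ at stage $\start$ lands you in $V$ at stage $\nRuns_{0}$ on the good event \textendash\ establishing that is exactly the uniform-in-time control you are deferring. The fix is what the paper does: apply the maximal inequalities over the whole horizon, obtaining confidence-dependent constants $K_{1},K_{2}$ (finite, though not arbitrarily small, because the full series $\sum_{\run}\step_{\run}^{2}/\mix_{\run}^{2}$ and $\sum_{\run}\step_{\run}\mix_{\run}$ converge), and then require the initial score differences to be deeper than $M_{\eps_{0}}+K_{1}+K_{2}$, \ie initialize in a sufficiently small score-dominant neighborhood of $\eq$. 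With that adjustment your stopping-time argument closes, and the attraction and rate parts go through as you sketch them; note only that the paper additionally invokes the martingale strong law to make the $o(\tau_{\run})$ statement hold for (almost) all trajectories converging to $\eq$, rather than merely on the high-probability event on which your exit time is infinite.
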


\begin{Proof}
Since \eqref{eq:FTQL} updates the score vector $\dstate_\run$ at each stage $\run$, we need a connection between the variables in the dual space, $\dspace$, and the ones in the primal, $\strats$. This connection is conveniently expressed through the so-called \textit{score-dominant sets} \cite{GVM21}. Formally, \cite{GVM21} shows that  for any $\eps >0$, there exist $M_{\play,\eps}$ for all $\play \in \players$ so that
\begin{equation}
\label{eq:score-subset}
\prod_{\play \in \players} Q_{\play}\parens{\mathcal{W}_{\play}\parens{M_{\play,\eps}}} \subseteq \mathcal{U}_\eps
\end{equation}
where: 
\begin{equation}
\mathcal{W}_{\play}\parens{M_{\play,\eps}} = \braces{\dstate_{\play}: \dstate_{\play,\pureq_{\play}}-\dstate_{\play,\pure_{\play}}>M_{\play,\eps},  \forall \pure_{\play} \neq \pureq_{\play}}
\end{equation} and
\begin{equation}
\label{eq:neighborhood}
\nhd_\eps = \braces{\strat \in \strats: \strat_{\play\pure_i^*} > 1-\eps, \forall \play \in \players}.
\end{equation}
Therefore, our goal in the sequel will be to find a set of initial conditions so that the corresponding score differences $\dstate_{\play\pureq_{\play}, \run}-\dstate_{\play\pure_{\play},\run}$, stay large enough throughout the stages of the algorithm for all players $\play \in \players$. We will now proceed to prove each part of the theorem separately.

\para{Stochastic stability}
To begin with, fix a confidence level $\conf >0$, and let $\nhd$ be a neighborhood of $\eq$ in $\strats$. Invoking \cref{lem:polar} with $m = 3$, we see that any $w \in \ball \parens{\payv\parens{\eq}, {d \over 2}}$ satisfies $w_{\play \pure_{\play}} - w_{\play \pureq_{\play}} + 3 \ell < 0$. By continuity of $\payv$, there exist a neighborhood $\overline{\nhd}$ of $\eq$ and $c >0$ such that $\overline{\nhd} \subseteq \nhd$ and $\payv_{\play \pure_{\play}}(\strat) - \payv_{\play \pureq_{\play}}(\strat) + 3 \ell \leq -c$, for $\strat \in \overline{\nhd}$.
Then, by \eqref{eq:score-subset},\eqref{eq:neighborhood}, there exist $\eps_0 > 0, M_{\play,\eps_0}$, for all $\play \in \players$ such that:
\begin{enumerate}[(a)]
\item $\nhd_{\eps_0} \subseteq \overline{\nhd} \subseteq \nhd$
\item $\prod_{\play \in \players} Q_{\play}\parens{\mathcal{W}_{\play}\parens{M_{\play,\eps_0}}} \subseteq \nhd_{\eps_0}$
\end{enumerate}


For our analysis, we decompose the approximate payoff vector $\signal_{\run}$ in components as follows
\begin{equation}
\signal_{\play\pure_{\play},\run}
	= \exof{\round\parens{\payv_{\play\pure_{\play}} \parens{\pure_{\run}} + \snoise_{\play,\run}} \given \curr[\filter]}
	+ \noise_{\play\pure_{\play},\run}
	+ \bias_{\play\pure_{\play},\run}
\end{equation}
where
$\payv_{\play\pure_{\play}}(\pure_{\run}) = \pay_{\play}(\pure_{\play};\pure_{-\play,\run})$ and:
\begin{enumerate}
\item
$\noise_{\play\pure_{\play},\run} \defeq \signal_{\play\pure_{\play},\run} - \exof{\round\parens{\payv_{\play\pure_{\play}}\parens{\est\pure_{\run}} + \snoise_{\play,\run}} \given \curr[\filter]}$ is a zero-mean error process.
\item
$\bias_{\play\pure_{\play},\run} \defeq \exof{\round\parens{\payv_{\play\pure_{\play}}\parens{\hat{\pure}_{\run}} + \snoise_{\play,\run}} \given \curr[\filter]} - \exof{\round\parens{\payv_{\play\pure_{\play}}\parens{\pure_{\run}} + \snoise_{\play,\run}} \given \curr[\filter]}$ is a systematic (non-zero-mean) error process due to (a) quantization; and (b) sampling from $\hat{\state}_{\run}$ instead of $\curr$.
\end{enumerate}
It is important to highlight that previous techniques of \cite{GVM21b} and references therein can no longer be applied to this setting, because the bias term $\bias_{\play,\run}$ is \emph{not} diminishing, but \emph{persistent} in all stages of the \eqref{eq:FTQL} due to the quantization error.
By comparison, all previous analyses require that any bias entering a learning algorithm vanish appropriately in the long run.

To proceed, we denote by $\Tilde{V}_{\play,\run}:= \exof{\round\parens{\payv_{\play}\parens{\pure_{\run}} + \snoise_{\play,\run}\cdot e} \given \curr[\filter]}$ where $e$ is a vector of ones of appropriate dimension, and by $\Tilde{\bias}_{\play\pure_{\play},\run} := \payv_{\play\pure_{\play}}\parens{\hat{\state}_{\run}} - \payv_{\play\pure_{\play}}\parens{{\curr}}$. 
\begin{restatable}{claim}{ineqs}
\label{claim:ineqs}
The following inequalities hold: $\supnorm{\Tilde{V}_{\play,\run} - \payv_{\play}\parens{\state_{\run}}} \leq {\ell \over 2}$ and
$\supnorm{\bias_{\play,\run} - \Tilde{\bias}_{\play,\run}} \leq \ell$
\end{restatable}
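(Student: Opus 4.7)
The plan is to prove both bounds by directly using the uniform quantization estimate $|\round(x)-x|\le \ell/2$ from \eqref{eq:quant-error}, combined with the tower property of conditional expectation and the zero-mean noise assumption \eqref{eq:zeromean}. The key algebraic insight is that in both $\payv_{\play\pure_{\play}}(\state_{\run})$ and $\payv_{\play\pure_{\play}}(\hat{\state}_{\run})$, the expectation structure lines up so that the quantity inside $\round(\cdot)$ is exactly what we would get by removing $\round$, which turns each bound into a conditional expectation of a quantization residual.

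For the first inequality, I would observe that since $\pure_{\run}\sim\state_{\run}$ conditionally on $\filter_{\run}$ and $\snoise_{\play,\run}$ has zero conditional mean,
\begin{equation}
\payv_{\play\pure_{\play}}(\state_{\run})
  = \exof{\payv_{\play\pure_{\play}}(\pure_{\run})+\snoise_{\play,\run}\given \filter_{\run}}.
\end{equation}
Subtracting this from $\Tilde{V}_{\play\pure_{\play},\run}$ and pulling the two conditional expectations together gives a conditional expectation of $\round(\cdot)-(\cdot)$, which is bounded in absolute value by $\ell/2$ pointwise by \eqref{eq:quant-error}. Taking absolute values inside the conditional expectation and then supremum over $\pure_{\play}$ yields $\supnorm{\Tilde{V}_{\play,\run}-\payv_{\play}(\state_{\run})}\le \ell/2$.

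For the second inequality, I would use the same trick twice. Writing
\begin{equation}
\bias_{\play\pure_{\play},\run}-\Tilde{\bias}_{\play\pure_{\play},\run}
  = A_{\play\pure_{\play},\run}-B_{\play\pure_{\play},\run},
\end{equation}
where $A_{\play\pure_{\play},\run}=\exof{\round(\payv_{\play\pure_{\play}}(\hat\pure_{\run})+\snoise_{\play,\run})\given \filter_{\run}}-\payv_{\play\pure_{\play}}(\hat\state_{\run})$ and $B_{\play\pure_{\play},\run}$ is the analogous quantity with $\pure_{\run}$ and $\state_{\run}$, note that $\hat\pure_{\run}\sim\hat\state_{\run}$ conditionally on $\filter_{\run}$, so the same identification $\payv_{\play\pure_{\play}}(\hat\state_{\run})=\exof{\payv_{\play\pure_{\play}}(\hat\pure_{\run})+\snoise_{\play,\run}\given \filter_{\run}}$ applies. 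Hence $|A_{\play\pure_{\play},\run}|\le \ell/2$ and, by the previous paragraph, $|B_{\play\pure_{\play},\run}|\le \ell/2$. The triangle inequality then gives the claimed $\ell$-bound uniformly over $\pure_{\play}\in\pures_{\play}$.

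There is no real obstacle here; the proof is essentially bookkeeping. The only care needed is in matching measurability: $\hat\state_{\run}$ and $\state_{\run}$ must be $\filter_{\run}$-measurable (which they are, being deterministic functions of $\dstate_{\run}$), so that the conditional expectations factor correctly, and the noise must be handled through \eqref{eq:zeromean} before invoking the pointwise quantization bound.
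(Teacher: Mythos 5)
Your proof is correct and follows essentially the same route as the paper: bound the pointwise quantization residual by $\ell/2$, take conditional expectations using the zero-mean noise and the identities $\exof{\payv_{\play\pure_{\play}}(\pure_{\run})\given\filter_{\run}}=\payv_{\play\pure_{\play}}(\state_{\run})$ and $\exof{\payv_{\play\pure_{\play}}(\hat\pure_{\run})\given\filter_{\run}}=\payv_{\play\pure_{\play}}(\hat\state_{\run})$, then obtain the second bound by the triangle inequality.
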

\begin{restatable}{claim}{ineps}
\label{claim:eps}
$\exof{\dnorm{\Tilde{\bias}_{\run}} \given \curr[\filter]} = O\parens{\eps_{\run}}$ and $\exof{\dnorm{U_{\run}}^2 \given \curr[\filter]} = O\parens{1 / \eps_{\run}^2}$
\end{restatable}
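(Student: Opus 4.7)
The plan is to prove each of the two bounds independently, leveraging (i) multilinearity of the payoff functions for the bias estimate, and (ii) the explicit-exploration lower bound $\est{\state}_{\play\pure_{\play},\run} \geq \mix_\run/\abs{\pures_\play}$ for the variance estimate. Both arguments are essentially standard IWE calculations, adapted to handle the extra rounding layer introduced by $\round$.

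\textbf{Bias bound.} The key observation is that $\Tilde{\bias}_{\play\pure_{\play},\run} = \payv_{\play\pure_{\play}}(\est{\state}_\run) - \payv_{\play\pure_{\play}}(\state_\run)$ is already $\curr[\filter]$-measurable (all action-sampling randomness has been averaged out inside $\payv$), so the conditional expectation in the claim is cosmetic and it suffices to bound $\dnorm{\Tilde{\bias}_\run}$ pathwise. Since $\payv_{\play\pure_{\play}}(\strat) = \pay_{\play}(\pure_\play;\strat_{-\play})$ is multilinear in $\strat_{-\play}$ with coefficients bounded by $\max\abs{\pay_{\play}}$, it is globally Lipschitz on $\strats_{-\play}$. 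Combined with the identity $\est{\state}_{\playalt,\run} - \state_{\playalt,\run} = \mix_\run(\unif - \state_{\playalt,\run})$, which gives $\onenorm{\est{\state}_{\playalt,\run} - \state_{\playalt,\run}} \leq 2\mix_\run$, a short multilinear/telescoping expansion over the $\nPlayers-1$ opponents of player $\play$ yields $\dnorm{\Tilde{\bias}_\run} \leq C\mix_\run$ for a constant $C$ depending only on the game data.

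\textbf{Variance bound.} The plan is to first split $\dnorm{U_\run}^2 \leq 2\dnorm{\signal_\run}^2 + 2\dnorm{\exof{\signal_\run \given \curr[\filter]}}^2$, and note that the second term is bounded by a constant since $\exof{\signal_\run \given \curr[\filter]}$ equals the expected quantized payoff, which is deterministically bounded in terms of $\max\abs{\pay_{\play}}$, $\ell$, and $\noisepar$. To bound the dominant first term, unpack the IWE: $\signal_{\play\pure_{\play},\run}$ vanishes unless $\est{\pure}_{\play,\run} = \pure_\play$, in which case $\abs{\signal_{\play\pure_{\play},\run}} \leq \abs{\est{\pay}_{\play,\run}}/\est{\state}_{\play\pure_{\play},\run} \leq \abs{\pures_\play}\abs{\est{\pay}_{\play,\run}}/\mix_\run$. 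Since at most one coordinate per player is nonzero along each sample path, the sup norm satisfies the pointwise estimate $\supnorm{\signal_{\play,\run}}^2 \leq \abs{\pures_\play}^2\,\est{\pay}_{\play,\run}^2/\mix_\run^2$. Taking conditional expectations and using $\exof{\est{\pay}_{\play,\run}^2 \given \curr[\filter]} \leq 2\sup\abs{\pay}^2 + 2\noisevar + \ell^2/2 = O(1)$ (via the $L^2$-bound on $\snoise$ together with $\abs{\round(x)-x}\leq\ell/2$ and $(a+b)^2 \leq 2a^2 + 2b^2$) then gives $\exof{\dnorm{\signal_\run}^2 \given \curr[\filter]} = O(1/\mix_\run^2)$, as claimed.

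The main obstacle here is purely computational: because the rounding operator $\round$ is nonlinear, one cannot freely interchange it with the conditional expectation, so the quantized payoff $\est{\pay}_{\play,\run}$ must be handled as a single $L^2$-bounded random variable rather than decomposed into a ``clean payoff plus noise'' term. Beyond that subtlety, the argument is standard, with the characteristic $1/\mix_\run^2$ scaling of the noise arising from using a pathwise (rather than averaged) lower bound $\est{\state}_{\play\pure_{\play},\run} \geq \mix_\run/\abs{\pures_\play}$ in combination with the dual norm $\dnorm=\supnorm$ on payoff vectors.
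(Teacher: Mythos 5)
Your proof is correct and follows essentially the same route as the paper's: the bias bound comes from Lipschitz continuity of the mixed payoff map together with $\lVert \hat x_n - x_n\rVert = O(\varepsilon_n)$, and the variance bound from the pointwise importance-weighting estimate $1/\hat x_{i\alpha_i,n} \le \lvert\mathcal{A}_i\rvert/\varepsilon_n$ combined with the $L^2$-boundedness of the quantized, noisy payoff and the boundedness of the conditional-mean term. If anything, your justification of the Lipschitz step via multilinearity is tighter than the paper's appeal to continuity on a compact set (which alone does not give Lipschitz continuity, though the conclusion holds since the payoff map is multilinear).
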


\cref{claim:ineqs} follows from \eqref{eq:quant-error} and some algebraic derivations, while \cref{claim:eps} holds due to Lipschitz continuity of $\payv(\cdot)$, compactness of $\strats$ and $\mathcal{L}^2$-boundedness of $\snoise$. The proofs are omitted due to lack of space.

With these two claims in hand, we will focus on player $\play$ and drop the index $i$ altogether.
For any $\pure\neq\pure^* \in \pures$ and assuming $\state_{\runalt} \in \nhd_{\eps_0}$ for $k = \start,\dots, \run$ we have:
\begin{align}
	\dstate_{\pure,\run+1} - \dstate_{\pure^*,\run+1} & = \dstate_{\pure,\run} - \dstate_{\pure^*,\run} + \gamma_{\run}\parens{V_{\pure,\run} - V_{\pure^*,\run}}
	\notag\\
	& = \dstate_{\pure,\run} - \dstate_{\pure^*,\run} +
	\gamma_{\run}\bracks{\parens{\Tilde{V}_{\pure,\run} - \Tilde{V}_{\pure^*,\run}}+ \parens{\bias_{\pure,\run} - \bias_{\pure^*,\run}} 
	 + \parens{\noise_{\pure,\run} - \noise_{\pure^*,\run}}}
	 \notag\\
	& \leq \dstate_{\pure,\run} - \dstate_{\pure^*,\run} +\gamma_{\run}[\parens{\payv_\pure\parens{\curr} - \payv_{\pure^*}\parens{\curr} + \ell} +\parens{\Tilde{\bias}_{\pure,\run}-\Tilde{\bias}_{\pure^*,\run} + 2\ell} \notag\\
	 &\quad +\parens{\noise_{\pure,\run} - \noise_{\pure^*,\run}}]
	 \notag\\
	& = \dstate_{\pure,\run} - \dstate_{\pure^*,\run} +\gamma_{\run}\parens{\payv_\pure\parens{\curr} - \payv_{\pure^*}\parens{\curr} + 3\ell}  +\gamma_{\run}\parens{\Tilde{\bias}_{\pure,\run}-\Tilde{\bias}_{\pure^*,\run}} \notag\\ &\quad+\gamma_{\run}\parens{\noise_{\pure,\run} - \noise_{\pure^*,\run}}
	\notag\\
	& = \dstate_{\pure,\start} - \dstate_{\pure^*,\start} +\sum_{\runalt = \start}^{\run}\gamma_{\runalt}\parens{\payv_\pure\parens{\state_{\runalt}} - \payv_{\pure^*}\parens{\state_{\runalt}} + 3\ell}  +\sum_{\runalt = \start}^{\run}\gamma_{\runalt}\parens{\Tilde{\bias}_{\pure,\runalt}-\Tilde{\bias}_{\pure^*,\runalt}} \notag\\
    &\quad +\sum_{\runalt = \start}^{\run}\gamma_{\runalt}\parens{\noise_{\pure,\runalt} - \noise_{\pure^*,\runalt}}
\end{align}
where the inequality step follows from \cref{claim:ineqs}.

We will first bound the term $\sum_{\runalt = \start}^\run\gamma_{\runalt}\parens{\noise_{\pure,\runalt} - \noise_{\pure^*,\runalt}}$ for all $\run \in \N$. If we define 
\begin{equation}
	R_{\run}:=\sum_{\runalt = \start}^\run\gamma_{\runalt}\parens{\noise_{\pure,\runalt} - \noise_{\pure^*,\runalt}}
\end{equation}
it is easy to see that $R_n$ is a martingale, as $\exof{\abs{R_{\run}} \given \curr[\filter]} < \infty$ and $\exof{R_{\run+1} \given \curr[\filter]} = R_{\run}, \forall \run$.

Moreover, for $K_1 > 0$, whose value will be determined later, we define the sequence of events
\begin{equation}
	D_{\run,K_1} = \braces{\sup_{\runalt \leq \run} \abs{\noise_{\pure,\runalt} - \noise_{\pure^*,\runalt}}\geq K_1}
\end{equation}
and 
\begin{equation}
	D_{K_1} = \braces{\sup_{\runalt \geq \start} \abs{\noise_{\pure,\runalt} - \noise_{\pure^*,\runalt}}\geq K_1}.
\end{equation}

By Doob's maximal inequality (Theorem 2.4, \cite{HH80}), we have 
\begin{equation}
\label{eq:doob_noise}
\probof{D_{\run,K_1}} \leq {\exof{R_{\run}^2} \over K_1^2}.
\end{equation}
Furthermore, we have that
\begin{equation}
    \exof{R_{\run}^2} = \sum_{\runalt = \start}^{\run}\gamma_{\runalt}^2\exof{\parens{\noise_{\pure,\runalt} - \noise_{\pure^*,\runalt}}^2}
\end{equation}
since 
\begin{equation}
	\exof{\noise_{\pure_1,\runalt}\noise_{\pure_2,\runaltalt}} = \exof{\exof{\noise_{\pure_1,\runalt}\noise_{\pure_2,\runaltalt} \given \filter_{\runalt}}} = \exof{\noise_{\pure_1,\runalt}\exof{\noise_{\pure_2,\runaltalt} \given \filter_{\runalt}}} = 0
\end{equation}
as $\noise_{\pure_1,\runalt}$ is $\filter_{\runalt}$-measurable for $\runalt < \runaltalt$, $\pure_1,\pure_2 \in \braces{\pure,\pure^*}$ and $\exof{\noise_{\pure_2,\runaltalt} \given \filter_{\runalt}}=0$.

Moreover
\begin{equation}
	\exof{\parens{\noise_{\pure,\runalt} - \noise_{\pure^*,\runalt}}^2} \leq 2\exof{\dnorm{\noise_{\runalt}}^2} = 2\exof{\exof{\dnorm{\noise_{\runalt}}^2 \given \filter_{\runalt}}} = O(1/\eps^{2}_{\runalt})
\end{equation}
by \cref{claim:eps}, and, therefore, there exists a constant $C_{1}$ such that
\begin{equation}
	\exof{R_{\run}^2} \leq C_{1} \sum_{\runalt = 0}^{\run} \frac{\gamma^{2}_{\runalt}}{\eps^{2}_{\runalt}}
\end{equation}
Hence, \eqref{eq:doob_noise} becomes:
\begin{equation}
\label{eq:noise_bound}
\probof{D_{\run,K_1}} \leq {\exof{R_{\run}^2} \over K_1^2} \leq \frac{C_{1} \sum_{\runalt = \start}^\run \gamma_{\runalt}^2 / \eps^{2}_{\runalt}}{K_{1}^{2}}
\end{equation}
Taking $\run \to \infty$ we get:
\begin{equation}
\probof{D_{K_1}} \leq \frac{C_{1} \sum_{\runalt = \start}^{\infty} \gamma_{\runalt}^2 / \eps^{2}_{\runalt}}{K_{1}^{2}} < \infty
\end{equation}
as $\braces{D_{\run,K_1}}_{\run \in \N}$ is an increasing sequence of events converging to $D_{K_1}$. Finally, setting the value of $K_1$ equal to:
\begin{equation}
	K_1 = \sqrt{\frac{2C_{1} \sum_{\runalt = \start}^{\infty} \gamma_{\runalt}^2 / \eps^{2}_{\runalt}}{\conf}}
\end{equation}
we get:
\begin{equation}
	\probof{D_{K_1}} \leq {\conf \over 2}
\end{equation}
Now, for the term $\sum_{\runalt = 0}^\run \gamma_{\runalt}\parens{\Tilde{\bias}_{\pure,\runalt} - \Tilde{\bias}_{\pure^*,\runalt}}$, we have that:
\begin{equation}
\abs*{\sum_{\runalt = \start}^\run \gamma_{\runalt}\parens{\Tilde{\bias}_{\pure,\runalt} - \Tilde{\bias}_{\pure^*,\runalt}}} \leq \sum_{\runalt = \start}^\run \gamma_{\runalt}\abs{\Tilde{\bias}_{\pure,\runalt} - \Tilde{\bias}_{\pure^*,\runalt}} \leq 2\sum_{\runalt = \start}^\run \gamma_{\runalt} \dnorm{\Tilde{\bias}_{\runalt}}
\end{equation}
As before, defining 
\begin{equation}
S_{\run}:=2\sum_{\runalt = \start}^\run \gamma_{\runalt} \dnorm{\Tilde{\bias}_{\runalt}}	
\end{equation}
it is easy to see that $S_{\run}$ is a submartingale, as $\exof{\abs{S_{\run}}} < \infty$ and $\exof{S_{\run+1} \given \curr[\filter]} \geq S_{\run}, \forall \run$.

Moreover, for $K_2 > 0$, whose value will be determined later, we define the sequence of events
\begin{equation}
	E_{\run,K_2} = \braces{\sup_{\runalt \leq \run} S_{\runalt} \geq K_2}
\end{equation}
and
\begin{equation}
	E_{K_2} = \braces{\sup_{\runalt \geq \start} S_{\runalt} \geq K_2}
\end{equation}
Again, by Doob's maximal inequality (Theorem 2.4, \cite{HH80}), we have 
\begin{equation}
	\probof{E_{\run,K_2}} \leq {\exof{S_{\run}} \over K_2} = 
	{2\sum_{\runalt = \start}^{\run} \gamma_{\runalt} \exof{\exof{\dnorm{\Tilde{\bias}_{\runalt}} \given \filter_{\runalt}}} \over K_2} \leq {C_{2} \sum_{\runalt = \start}^{\run} \gamma_{\runalt} \eps_{\runalt} \over K_2} < \infty
\end{equation}
for some constant $C_{2}$, by \cref{claim:eps}. Taking $\run \to \infty$ we get:
$$
\probof{E_{K_2}} \leq {C_{2} \sum_{\runalt = 0}^{\infty} \gamma_{\runalt} \eps_{\runalt} \over K_2} < \infty
$$
as $\braces{E_{\run,K_2}}_{\run \in \N}$ is an increasing sequence of events converging to $E_{K_2}$. Finally, setting the value of $K_2$ equal to:
\begin{equation}
	K_2 = \frac{2 C_{2}\sum_{\runalt = 0}^\infty \gamma_{\runalt} \eps_{\runalt}}{\conf}
\end{equation}
we get:
\begin{equation}
	\probof{E_{K_2}} \leq {\conf \over 2}
\end{equation}
Hence, by the union bound, we get: $\probof{D_{K_1}\cup E_{K_2}} \leq \conf$. Setting $M > M_{\eps_0} + K_1 + K_2$, we get that if $\dstate_\start \in \mathcal{W}\parens{M}$, i.e. $\dstate_{\pure,\start} - \dstate_{\pure^*,\start} < -M$ then:
\begin{align}
 \dstate_{\pure,\run+1} - \dstate_{\pure^*,\run+1}  &= \dstate_{\pure,\start} - \dstate_{\pure^*,\start} + \sum_{\runalt = \start}^\run \gamma_{\runalt}\parens{\payv_\pure\parens{\state_{\runalt}} - \payv_{\pure^*}\parens{\state_{\runalt}} + 3\ell} + \sum_{\runalt = \start}^\run\gamma_{\runalt}\parens{\noise_{\pure,\runalt} - \noise_{\pure^*,\runalt}} \notag\\
 & \quad + \sum_{\runalt = \start}^\run \gamma_{\runalt}\parens{\Tilde{\bias}_{\pure,\runalt} - \Tilde{\bias}_{\pure^*,\runalt}}
	\notag\\
	&\leq -M + K_1 + K_2 < -M_{\eps_0}
\end{align}
on the event $\parens{D_{K_1}\cup E_{K_2}}^c$, from which we get that $\next \in \nhd_{\eps_0}$, i.e. $\next \in \nhd$ with probability at least $1-\conf$. Therefore, we conclude that $\eq$ is stochastically stable.

\smallskip
\para{Stochastic asymptotic stability}
From the previous analysis, on the event $\parens{D_{K_1}\cup E_{K_2}}^c$ we have that:
\begin{equation}
\label{eq:rates}
	\dstate_{\pure,\run+1} - \dstate_{\pure^*,\run+1} < -M_{\eps_0} -c \sum_{\runalt = \start}^\run \gamma_{\runalt}
\end{equation}
Sending $n\to \infty$, we get that $\dstate_{\pure,\run+1} - \dstate_{\pure^*,\run+1} \rightarrow - \infty$, from which we have that for all $\Tilde{M} > 0, \dstate_{\runalt} \in \mathcal{W}\parens{\Tilde{M}}$ eventually. Hence, for all $\Tilde{\eps} >0, \state_{\runalt} \in \nhd_{\Tilde{\eps}}$ eventually, from which we get that $\state_{\runalt} \rightarrow \eq$ as $\runalt \rightarrow \infty$.


\para{Rates of convergence}
Finally, to establish the rate of convergence of \eqref{eq:FTQL}, let $\sum_{\runalt = \start}^{\run}\gamma_{\runalt}$ as $\tau_{\run}$ for all $\run$.
Since $R_\run = \sum_{\runalt = \start}^\run\gamma_{\runalt}\parens{\noise_{\pure,\runalt} - \noise_{\pure^*,\runalt}}$ is a martingale, $\lim_{\run \rightarrow \infty}\tau_\run = \infty$ and $\sum_{\runalt = 1}^{\infty}\tau^{-2}_{\runalt}\exof{\abs{\gamma_{\runalt}\parens{\noise_{\pure,\runalt} - \noise_{\pure^*,\runalt}}}^2 \given \filter_{\runalt}} \leq C \sum_{\runalt = 1}^{\infty}\tau^{-2}_{\runalt}\gamma^{2}_{\runalt}/\eps^{2}_{\runalt} < \infty$, for some constant $C>0$, by Strong law of large numbers for martingales (Theorem 2.18, \cite{HH80}), we have that
\begin{equation}
	\frac{R_{\run}}{\tau_{\run}} \rightarrow 0 \quad \text{a.s.}
\end{equation}
as $\run \rightarrow \infty$, i.e. $\probof{\Omega_1} = 1$ for $\Omega_1 = \braces{\frac{R_{\run}}{\tau_{\run}} \rightarrow 0, \; \text{as} \; \run \rightarrow \infty}$

Moreover, since $S_{\run}$ is a nonnegative submartingale with $\exof{S_\run}$ bounded for all $\run$, by Doob's submartingale convergence theorem (Theorem 2.5, \cite{HH80})
we obtain that there exist a random variable $S_{\infty}$ with $\exof{\abs{S_{\infty}}} < \infty$ and
\begin{equation}
	S_{\run} \rightarrow S_{\infty} \quad \text{a.s.}
\end{equation}
as $\run \rightarrow \infty$. Letting $A = \braces{S_{\run} \rightarrow S_{\infty}, \; \text{as} \; \run \rightarrow \infty}$, it holds $\probof{A} = 1$.

Since $S_{\infty} \in \mathcal{L}^{1}\parens{\prob}$, we get that $S_{\infty} < \infty$ a.s., which means that, if we define the set $B = \braces{S_{\infty} < \infty}$, we have $\probof{B} = 1$. Letting $\Omega_2 = A \cap B$, we have that $\probof{\Omega_2} = 1$, since 
\begin{equation}
	\probof{\Omega_{2}^c} = \probof{A^c \cup B^c} \leq \probof{A^c}+\probof{B^c} = 0 \Rightarrow \probof{\Omega_{2}^c} = 0	
\end{equation} 
Hence, on $\Omega_2$, we get that
\begin{equation}
	\frac{S_{\run}}{\tau_{\run}} \rightarrow 0 
\end{equation}
as $\run \rightarrow \infty$, since $S_\run \rightarrow S_{\infty} < \infty$ and $\tau_\run \rightarrow \infty$.

Therefore, since $\probof{\Omega_1} = \probof{\Omega_2} = 1$, we get that $\probof{\Omega_1\cap \Omega_2} = 1$, with the same reasoning as before. Now, let $\Omega_3$ be the event defined as $\Omega_3 = \braces{\curr \rightarrow \eq, \; \text{as} \; \run \rightarrow \infty}$.

Since $\curr \rightarrow \eq$ as $\run \rightarrow \infty$ on $\Omega_1\cap \Omega_2\cap \Omega_3$, we have that $\curr \in \overline{\nhd}$ eventually, i.e. there exists $\run_{0} \in \N$ such that $\curr \in \overline{\nhd}$ for all $\run \geq \run_{0}$, from which we get $\payv_{\play \pure_{\play}}(\strat) - \payv_{\play \pureq_{\play}}(\strat) + 3 \ell \leq -c$. Hence, for $\run \geq \run_{0}$, we obtain
\begin{align}
	\dstate_{\pure,\run+1} - \dstate_{\pure^*,\run+1} & = \dstate_{\pure,\run_{0}} - \dstate_{\pure^*,\run_{0}} +\sum_{\runalt = \run_{0}}^{\run}\gamma_{\runalt}\bracks{\parens{\Tilde{V}_{\pure,\runalt} - \Tilde{V}_{\pure^*,\runalt}}+ \parens{\bias_{\pure,\runalt} - \bias_{\pure^*,\runalt}} 
	 + \parens{\noise_{\pure,\runalt} - \noise_{\pure^*,\runalt}}} \notag\\
	& \leq \dstate_{\pure,\run_{0}} - \dstate_{\pure^*,\run_{0}} +\sum_{\runalt = \run_{0}}^{\run}\gamma_{\runalt}\bracks{\parens{\payv_\pure\parens{\state_{\runalt}} - \payv_{\pure^*}\parens{\state_{\runalt}} + 3\ell}  + \parens{\Tilde{\bias}_{\pure,\runalt}-\Tilde{\bias}_{\pure^*,\runalt}}\notag\\
    & \quad+ \parens{\noise_{\pure,\runalt} - \noise_{\pure^*,\runalt}}} \notag\\
	& \leq \dstate_{\pure,\run_{0}} - \dstate_{\pure^*,\run_{0}} - c \sum_{\runalt = \run_{0}}^{\run}\gamma_{\runalt} + \parens{S_{\run} - S_{\run_{0}-1}} +\parens{R_{\run} - R_{\run_{0}-1}} \notag\\
	& = - c \tau_{\run} + o\parens{\tau_{\run}}
\end{align}

Therefore, we have
\begin{align}
\label{eq:theta-primes}
	\theta^\prime\parens{\state_{\pure,\run+1}} & \leq \theta^\prime\parens{\state_{\pure^*,\run+1}} -c \tau_{\run} +o\parens{\tau_{\run}}
	\leq \theta^\prime\parens{1} -c \tau_{\run} +o\parens{\tau_{\run}}
\end{align}
from which we conclude
\begin{equation}
	\state_{\pure,\run+1} \leq \phi\parens{-c \tau_{\run} +o\parens{\tau_{\run}}}
\end{equation}
Aggregating over all strategies $\pure \in \pures, \pure \neq \pure^*$ we have:
\begin{equation}
\label{eq:rates}
	\onenorm{\eq - \state_{\run+1}} \leq 2 \sum_{\pure \neq \pure^*} \phi\parens{-c \tau_{\run} +o\parens{\tau_{\run}}}
\end{equation}
Finally, since $\probof{\Omega_1\cap \Omega_2} = 1$, we have that $\probof{\parens{\Omega_1\cap \Omega_2}\cap \Omega_3} = \probof{\Omega_3}$, and therefore, the convergence result holds for (almost) all trajectories converging to $\eq$.
\end{Proof}

The qualitative behavior of the dynamics as a function of the quantization length is shown in Fig. \ref{fig:simulations}. 

\begin{figure}
     \centering
     \begin{subfigure}[b]{\columnwidth}
        \begin{subfigure}[b]{0.30\columnwidth}
         \centering
         \includegraphics[scale=0.25]{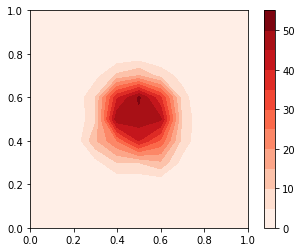}
         \label{fig:y equals x}
     \end{subfigure}
     \hfill
     \begin{subfigure}[b]{0.30\columnwidth}
         \centering
         \includegraphics[scale=0.25]{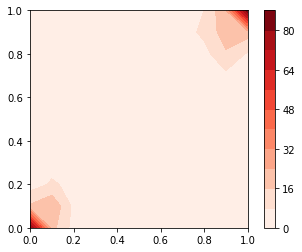}
         \label{fig:three sin x}
     \end{subfigure}
     \hfill
     \begin{subfigure}[b]{0.30\columnwidth}
         \centering
         \includegraphics[scale=0.25]{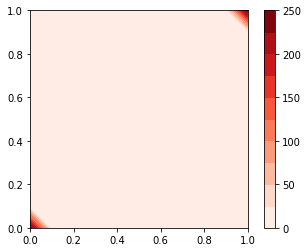}
         \label{fig:five over x}
     \end{subfigure}
     \subcaption{$\ell = 0$}
     \vspace*{2mm}
     \end{subfigure}
     \begin{subfigure}[b]{\columnwidth}
        \begin{subfigure}[b]{0.30\columnwidth}
         \centering
         \includegraphics[scale=0.25]{figs/zero_it0.png}
         \label{fig:y equals x}
     \end{subfigure}
     \hfill
     \begin{subfigure}[b]{0.30\columnwidth}
         \centering
         \includegraphics[scale=0.25]{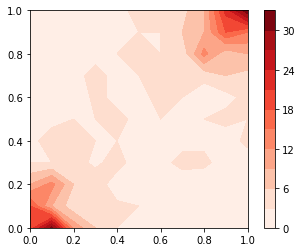}
         \label{fig:three sin x}
     \end{subfigure}
     \hfill
     \begin{subfigure}[b]{0.30\columnwidth}
         \centering
         \includegraphics[scale=0.25]{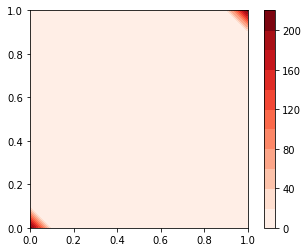}
         \label{fig:five over x}
     \end{subfigure}
     \subcaption{$\ell = 1.5$}
     \vspace*{2mm}
     \end{subfigure}
     \begin{subfigure}[b]{\columnwidth}
        \begin{subfigure}[b]{0.30\columnwidth}
         \centering
         \includegraphics[scale=0.25]{figs/zero_it0.png}
         \label{fig:y equals x}
     \end{subfigure}
     \hfill
     \begin{subfigure}[b]{0.30\columnwidth}
         \centering
         \includegraphics[scale=0.25]{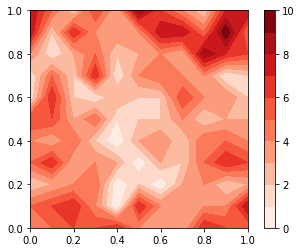}
         \label{fig:three sin x}
     \end{subfigure}
     \hfill
     \begin{subfigure}[b]{0.30\columnwidth}
         \centering
         \includegraphics[scale=0.25]{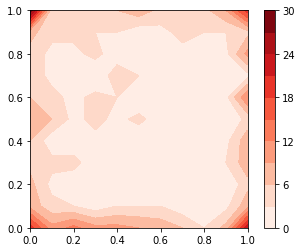}
         \label{fig:five over x}
     \end{subfigure}
     \subcaption{$\ell = 4$}
     \end{subfigure}
        \caption{
        Density `heat-maps' of the temporal evolution of \eqref{eq:FTQL} at times 0 (left), 50 (middle), 2000 (right) in 500 instances (sampled as $Y_1 \sim U[0,1]^{4}$) on a $2\times2$ symmetric game with $\pay(a_1, b _1) = \pay(a_2,b_2) = 5.1$, $\pay(a_1, b _2) = \pay(a_2,b_1) = 2.4$, $p = 0.75$, $q = 0.25$, and $\snoise_\run \sim U[-0.1,0.1]$ for three quantization errors: $\ell = 0,\,1.5,\,4$. Both $(a_1, b _1)$ and $(a_2, b _2)$ are pure Nash equilibria. In each plot, the horizontal axis corresponds to $x_{1a_1}$ and the vertical one to $x_{2b_1}$. We observe that for small quantization errors $(\ell = 0,\,1.5)$, the instances converge to the \aclp{NE} $(a_1, b _1)$ and $(a_2, b _2)$; however, for $\ell = 1.5$, the convergence is slower. On the contrary, for large quantization length $(\ell = 4)$, the behavior of the system is unpredictable and all the pure strategy profiles become attractors. Note the {\em gradual ``disintegration'' of convergence} with larger quantization error.
        } 
        \label{fig:simulations}
\end{figure}
Some corollaries and remarks are in order. We begin by discussing the possible schedules for the algorithm's step-size and exploration parameters:
here, a standard choice is to take $\step_{\run} \propto 1/\run^{\pexp}$ and $\mix_{\run} \propto 1/\run^{\qexp}$, with $\pexp,\qexp \geq 0$ chosen so as to satisfy \eqref{eq:params}.
A straightforward verification gives the conditions
\begin{equation}
\label{eq:params-schedule}
\pexp
	\leq 1,
	\quad
\pexp + \qexp
	> 1
	\quad
	\text{and}
	\quad
2\pexp - 2\qexp
	> 1
\end{equation}
which, in turn, provide the following explicit guarantee:

\begin{restatable}{corollary}{rateschedule}
\label{cor:rate-schedule}
Suppose that \eqref{eq:FTQL} is run with assumptions as in \cref{thm:main} and with step-size and exploration parameters satisfying \eqref{eq:params-schedule}.
Then, for all $\pexp>3/4$, \eqref{eq:FTQL} achieves
\begin{equation}
\label{eq:rate-schedule}
\onenorm{\state_{\play,\run} - \eq_{\play}}
	\leq 2 \insum_{\pure_{\play} \neq \pureq_{\play}} \phi_{\play}\parens*{-\Theta(\run^{1-\pexp})}.
\end{equation}
\end{restatable}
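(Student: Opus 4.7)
The plan is to read \cref{cor:rate-schedule} off from \cref{thm:main} via a short three-step bookkeeping argument: (i) translate the abstract summability conditions \eqref{eq:params} into the polynomial inequalities \eqref{eq:params-schedule}; (ii) derive the asymptotics of the cumulative step-size $\tau_\run=\sum_{\runalt=\start}^{\run}\step_\runalt$; and (iii) substitute into the rate bound \eqref{eq:rate}.

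For step (i), the standard $p$-series test gives $\sum_\run \run^{-\pexp}=\infty$ iff $\pexp\leq 1$, $\sum_\run \run^{-(\pexp+\qexp)}<\infty$ iff $\pexp+\qexp>1$, and $\sum_\run \run^{-2(\pexp-\qexp)}<\infty$ iff $2\pexp-2\qexp>1$; these are exactly the three inequalities in \eqref{eq:params-schedule}. A one-line feasibility check then shows that they admit some $\qexp\geq 0$ if and only if $\qexp$ lies in the interval $(\max(0,\,1-\pexp),\,\pexp-1/2)$, which is nonempty precisely when $\pexp>3/4$. This simultaneously justifies the hypothesis of the corollary and shows that the threshold $3/4$ is sharp within the polynomial-schedule class. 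For step (ii), a standard integral comparison yields $\tau_\run=\tfrac{1}{1-\pexp}\,\run^{1-\pexp}(1+o(1))=\Theta(\run^{1-\pexp})$ for $\pexp\in(3/4,1)$.

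Step (iii) is then immediate. Plugging this asymptotics into \eqref{eq:rate}, the argument of $\phi_\play$ becomes $-c\tau_\run+o(\tau_\run)$, and since $c>0$ is a strictly positive game-dependent constant supplied by \cref{thm:main}, the argument simplifies to $-\Theta(\run^{1-\pexp})$; substituting into the outer sum of \eqref{eq:rate} gives \eqref{eq:rate-schedule}.

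I expect no substantive obstacle here: the analytic heavy lifting was already done in \cref{thm:main}, and the corollary is a direct translation of its hypotheses and conclusion into the polynomial-schedule language. The only points that deserve a moment's care are (a) checking that the $o(\tau_\run)$ remainder coming from the martingale terms in the proof of \cref{thm:main} is genuinely sublinear on the polynomial scale $\run^{1-\pexp}$, so that the $\Theta$-notation inside $\phi_\play$ is warranted; and (b) observing that the feasibility interval $(\max(0,\,1-\pexp),\,\pexp-1/2)$ degenerates exactly at $\pexp=3/4$, which is what pins down the threshold in the statement.
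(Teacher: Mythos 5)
Your proposal is correct and follows essentially the same route as the paper's own proof: translate \eqref{eq:params} into \eqref{eq:params-schedule} via the $p$-series test, observe that feasibility forces $\pexp\in(3/4,1]$, compute $\tau_\run=\Theta(\run^{1-\pexp})$ for $\pexp<1$, and substitute into \eqref{eq:rate}. The only cosmetic difference is that you spell out the feasibility interval $(\max(0,1-\pexp),\,\pexp-1/2)$ for $\qexp$, whereas the paper simply asserts $\pexp\in(3/4,1]$ and separately notes that $\pexp=1$ gives the slower $\tau_\run=\Theta(\log\run)$; both are routine and lead to the same conclusion.
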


\begin{Proof}
From \eqref{eq:params-schedule} it is easy to see that $p \in (3/4,1]$. For $p=1$, $\tau_{\run} = \Theta\parens{\log\run}$, so \eqref{eq:FTQL} achieves faster convergence if $p<1$. 
Since for $p \in (3/4,1)$ we have that $\tau_{\run} = \Theta\parens{\run^{1-p}}$, the result is immediate from \eqref{eq:rates}.
\end{Proof}

\begin{remark*}
In the absence of quantization, \ac{FTRL} achieves a convergence rate of $\phi(-\sum_{\runalt=\start}^{\run}\step_{\runalt})$ \cite{GVM21b} with $p\in[0,1]$.
In our case, if the assumption for $\snoise$ is strengthened to almost sure boundedness (or sub-Gaussian increments), we can likewise relax the step-size requirements and achieve the rate \eqref{eq:rate-schedule} for \emph{any} $\pexp\in[0,1]$.
\endenv
\end{remark*}

Our next result concerns the rate of convergence of \eqref{eq:FTQL} for different choices of the mirror map $\mirror$ as defined in \eqref{eq:mirror}:

\begin{restatable}{corollary}{ratemirror}
\label{cor:rate-mirror}
Suppose that \eqref{eq:FTQL} is run with assumptions as in \cref{cor:rate-schedule}.
Then:
\begin{enumerate}
[left=0.5em,itemsep=0pt]

\item
The \acl{EW} variant of the algorithm \textpar{$\theta(z) = z\log z$} achieves convergence to strict \aclp{NE} at a rate of:  \begin{equation}
    \onenorm{\state_{\play,\run} - \eq_{\play}}
	\leq 2 \insum_{\pure_{\play} \neq \pureq_{\play}} \exp\parens*{-\Theta(\run^{1-\pexp})}.
\end{equation}
\item
The Euclidean variant of the algorithm \textpar{$\theta(z) = z^{2}/2$} achieves convergence to strict \aclp{NE} at a \textbf{finite} number of iterations.
\end{enumerate}
\end{restatable}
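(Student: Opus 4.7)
The plan is to plug each of the two regularizers into the general bound from \cref{cor:rate-schedule}, which already gives
\(
\onenorm{\state_{\play,\run} - \eq_{\play}}
    \leq 2 \sum_{\pure_{\play} \neq \pureq_{\play}} \phi_{\play}(-\Theta(\run^{1-\pexp}))
\),
so everything reduces to computing the rate function $\phi_{\play}$ explicitly via the definition \eqref{eq:ratefn} and tracking what happens when its argument tends to $-\infty$. The key observation is the dichotomy between steep regularizers (where $\theta'(0^{+})=-\infty$ and $\phi$ never hits the boundary value $0$) and non-steep ones (where $\theta'(0^{+})$ is finite and $\phi$ vanishes identically once its argument drops below that threshold), which is exactly what produces the qualitative difference between exponential decay and finite-time convergence.

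For the entropic case $\theta(z)=z\log z$, the first step is to differentiate to obtain $\theta'(z)=\log z +1$, so that $\theta'(0^{+})=-\infty$ and $(\theta')^{-1}(\dpoint)=\exp(\dpoint-1)$. Then \eqref{eq:ratefn} gives $\phi(\dpoint)=\exp(\dpoint-1)$ for $\dpoint\le 1$, and since $-\Theta(\run^{1-\pexp})\to-\infty$ we may absorb the additive constant $-1$ into the $\Theta$ notation and conclude that the bound from \cref{cor:rate-schedule} reads $\onenorm{\state_{\play,\run}-\eq_{\play}}\le 2\sum_{\pure_{\play}\neq\pureq_{\play}} \exp(-\Theta(\run^{1-\pexp}))$, as claimed.

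For the Euclidean case $\theta(z)=z^{2}/2$, the derivative is $\theta'(z)=z$ with $\theta'(0^{+})=0$, so $\phi$ vanishes on the whole half-line $(-\infty,0]$. Since the argument $-c\tau_{\run}+o(\tau_{\run})$ from the main theorem (see \eqref{eq:theta-primes}) tends to $-\infty$, there exists a (random) index $\run_{0}$ beyond which this argument is $\le 0$, and hence $\phi(-c\tau_{\run}+o(\tau_{\run}))=0$ for all $\run\ge \run_{0}$. Unpacking what this means at the level of \eqref{eq:theta-primes}: for every off-equilibrium coordinate $\pure_{\play}\ne\pureq_{\play}$, one has $\theta'(\state_{\pure_{\play},\run+1})\le \theta'(1) -c\tau_{\run}+o(\tau_{\run})<0=\theta'(0^{+})$, which combined with the non-negativity constraint $\state_{\pure_{\play},\run+1}\ge 0$ forces $\state_{\pure_{\play},\run+1}=0$ exactly. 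Thus $\state_{\play,\run}=\eq_{\play}$ for all $\run\ge\run_{0}$, i.e. the iterate reaches the equilibrium in finitely many steps.

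The main obstacle, such as it is, is the Euclidean case: one has to be careful that the formal inequality $\state_{\pure_{\play},\run+1}\le \phi(\,\cdot\,)=0$ is not a vacuous statement but is in fact enforced by the projection step that defines $\mirror_{\play}$, so that the equality $\state_{\pure_{\play},\run+1}=0$ genuinely holds. This is precisely the non-steepness of $\theta$ and corresponds to the fact that the closest-point projection onto $\strats_{\play}$ outputs vertices whenever the dual variable is sufficiently aligned with a normal direction of $\strats$ at that vertex; no routine calculations beyond this are needed.
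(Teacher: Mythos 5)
Your proposal is correct and follows essentially the same route as the paper: for the entropic kernel you compute $\phi(\dpoint)=\exp(\dpoint-1)$ and absorb the constant into the $\Theta(\run^{1-\pexp})$ term, and for the Euclidean kernel you exploit the finiteness of $\theta'(0)$ together with \eqref{eq:theta-primes} to force $\state_{\pure_{\play},\run+1}=0$ once $-c\tau_{\run}+o(\tau_{\run})$ drops below $\theta'(0^{+})$, which is exactly the paper's sandwich argument $\theta'(0)\le\theta'(\state_{\pure,\run+1})\le\theta'(0)$. Your extra remark that the exact equality is enforced by the non-steepness of $\theta$ (i.e., the projection outputting the vertex) is a welcome clarification but does not change the substance of the argument.
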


\begin{Proof}
\textit{1)} Since $\theta\parens{x} = x \log x$, we have $\phi\parens{x} = \exp(x-1)$. So, we obtain $\phi_{\play} \parens{-\Theta(\run^{1-\pexp})} = \exp\parens{-\Theta(\run^{1-\pexp})}$. Then, the result is immediate.\\[2mm]
\textit{2)} Since $\state_{\play,\run} \geq 0$, $\theta'$ increasing, \eqref{eq:theta-primes} becomes 
\begin{equation}
    \theta'(0) \leq \theta'(\state_{\pure,\run+1}) \leq \theta^\prime\parens{1} -c \tau_{\run} +o\parens{\tau_{\run}}
\end{equation}
and, since $\lim_{\run}\tau_{\run} = \infty$, we obtain 
\begin{equation}
    \tau_{\run} \geq \frac{1}{c}\parens{\theta^\prime\parens{1} - \theta'\parens{0} +o\parens{\tau_{\run}}}
\end{equation}
for large $\run$. Combining the above inequalities, we get
\begin{equation}
    \theta'(0) \leq \theta'(\state_{\pure,\run+1}) \leq \theta'(0)
\end{equation}
from which we conclude that $\state_{\pure,\run+1} = 0$ for sufficiently large $\run$, as per our claim.
\end{Proof}
The sharp separation between exponential and Euclidean variants of \ac{FTRL} with \emph{non-quantized} feedback was also observed in \cite{GVM21b}.
What is rather surprising here is that \Cref{cor:rate-mirror} echoes the non-quantized rates despite the presence of quantization:
the examples discussed in \cref{sec:preliminaries} show that the transition from convergence to non-convergence is sharp, so this would in turn suggest that the precise identification of a strict \acl{NE} in a finite number of iterations is not possible.
As we explain (\cf \cref{lem:cone,lem:polar}), what enables this result is the structure of the polar cone to $\strats$ at a strict equilibrium, which is reflected on the dependence of the rates on the difference $\mindiff - 3\ell$ via the constant $\const$:
when $\ell$ exceeds a critical value, any drift towards the strict equilibrium in question disappears, and \eqref{eq:FTQL} abruptly loses all its stability and convergence properties. 

For completeness, our last corollary concerns the rate of convergence of the players' actual sampling strategies $\est\state_{\run}$.
The precise result here is as follows:

\begin{restatable}{corollary}{ratesimple}
\label{cor:rate-sample}
With assumptions as in \cref{cor:rate-schedule}, the players' sampling strategies under \eqref{eq:FTQL} enjoy the rate:
\begin{equation}
\label{eq:rate-sample}
\onenorm{\est\state_{\play,\run} - \eq_{\play}}
	\leq 2 \sum_{\pure_{\play}\neq\pureq_{\play}} \phi_{\play}\parens{-\Theta(\run^{1-\pexp})}
		+ \Theta(1/\run^{\qexp})
\end{equation}
Accordingly, under the exponential variant of \eqref{eq:FTQL}, $\est\state_{\run}$ converges at a rate of $\Theta(1/\run^{\qexp})$ for all $\qexp\in(0,1/2)$.
\end{restatable}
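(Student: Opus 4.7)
\textbf{Proof proposal for \cref{cor:rate-sample}.} The plan is to reduce everything to \cref{cor:rate-schedule} by exploiting the explicit convex combination that defines the sampling strategy. Concretely, by the definition of $\est\state_{\play\pure_{\play},\run} = (1-\mix_{\run})\state_{\play\pure_{\play},\run} + \mix_{\run}/\abs{\pures_{\play}}$, we can write
\begin{equation}
\est\state_{\play,\run} - \eq_{\play}
	= (1-\mix_{\run})(\state_{\play,\run} - \eq_{\play}) + \mix_{\run}\parens*{\ones/\abs{\pures_{\play}} - \eq_{\play}}
\end{equation}
so that the triangle inequality gives
$\onenorm{\est\state_{\play,\run} - \eq_{\play}}
	\leq \onenorm{\state_{\play,\run} - \eq_{\play}} + \mix_{\run} \onenorm{\ones/\abs{\pures_{\play}} - \eq_{\play}}$.
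The second term is $\Theta(\mix_{\run}) = \Theta(1/\run^{\qexp})$, since $\eq_{\play}$ is a vertex of $\strats_{\play}$ so $\onenorm{\ones/\abs{\pures_{\play}} - \eq_{\play}}$ is a game-dependent constant.

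For the first term, on the event $\braces{\state_{\run}\to\eq}$ identified in the proof of \cref{thm:main}, \cref{cor:rate-schedule} yields $\onenorm{\state_{\play,\run}-\eq_{\play}} \leq 2\sum_{\pure_{\play}\neq\pureq_{\play}}\phi_{\play}(-\Theta(\run^{1-\pexp}))$. Combining the two bounds gives \eqref{eq:rate-sample} directly.

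For the exponential specialization, I would then invoke \cref{cor:rate-mirror}(1): the rate function $\phi_{\play}(t) = \exp(t-1)$ satisfies $\phi_{\play}(-\Theta(\run^{1-\pexp})) = \exp(-\Theta(\run^{1-\pexp}))$, which decays super-polynomially in $\run$ for any $\pexp<1$, and is trivially subsumed by $\Theta(1/\run^{\qexp})$. Hence the dominant term is $\Theta(1/\run^{\qexp})$, and we just need to check that the admissible range $\qexp\in(0,1/2)$ is nonempty under \eqref{eq:params-schedule}. This is the only delicate point: the constraint $2\pexp - 2\qexp > 1$ forces $\pexp > \qexp + 1/2$, and combined with $\pexp\leq 1$ this is compatible exactly when $\qexp < 1/2$, while $\pexp+\qexp>1$ is automatically met once $\pexp>1/2$. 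So the stated range of $\qexp$ is sharp and the rate $\Theta(1/\run^{\qexp})$ holds throughout it.

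I do not expect a serious obstacle here: the argument is essentially a bookkeeping exercise converting the convergence rate of the ``ideal'' mixed strategy $\state_{\run}$ into the corresponding rate for the ``inflated'' sampling strategy $\est\state_{\run}$, with the exploration bias $\mix_{\run}$ entering as an additive perturbation. The only point that requires care is ensuring that the parameter window described by \eqref{eq:params-schedule} actually permits $\qexp$ arbitrarily close to (but below) $1/2$, which I would verify by direct substitution as above.
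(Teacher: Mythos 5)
Your proof is correct and follows essentially the same route as the paper: a triangle-inequality split that isolates the exploration perturbation of size $\Theta(\mix_{\run})=\Theta(1/\run^{\qexp})$ from the convergence rate of $\state_{\run}$ supplied by \cref{cor:rate-schedule}, followed by the specialization $\phi_{\play}(-\Theta(\run^{1-\pexp}))=\exp(-\Theta(\run^{1-\pexp}))$ for the entropic variant. One minor aside: your remark that $\pexp+\qexp>1$ is ``automatically met once $\pexp>1/2$'' is inaccurate (it requires $\pexp>1-\qexp$), but this does not affect the conclusion, since choosing $\pexp$ sufficiently close to $1$ satisfies all the conditions in \eqref{eq:params-schedule} precisely when $\qexp<1/2$.
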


\begin{proof}
By the triangle inequality, we obtain
\begin{align}
	\onenorm{\hat{\state}_{\play,\run} - \eq_{\play}} & \leq \onenorm{\hat{\state}_{\play,\run} - \state_{\play,\run}} + \onenorm{\state_{\play,\run} - \eq_{\play}}
	\notag\\
	&\leq  2 \sum_{\pure_{\play}\neq\pureq_{\play}} \phi_{\play}\parens{-c \tau_{\run} +o\parens{\tau_{\run}}} + \abs{\pures}\;\eps_{\run}
	\notag\\
	&= 2 \sum_{\pure_{\play}\neq\pureq_{\play}} \phi_{\play}\parens{-\Theta(\run^{1-\pexp})}
		+ \Theta(1/\run^{\qexp})
\end{align}
Finally, for the exponential variant of \eqref{eq:FTQL}, the rate is $\Theta(1/\run^{\qexp})$, since $ \phi_{\play}\parens{-\Theta(\run^{1-\pexp})} = \exp\parens{-\Theta(\run^{1-\pexp})}$.
\end{proof}
We conclude with the proof of the two intermediate claims used in the proof of \cref{thm:main}, namely:

\ineqs*

\ineps*

\begin{proof}[Proof of \cref{claim:ineqs}]
For any player $\play \in \players$ and any $\pure_\play \in \pures_\play$, we have
\begin{equation}
\payv_{\play\pure_{\play}} \parens{\pure_{\run}} + \snoise_{\play,\run} - {\ell \over 2} \leq \round\parens{\payv_{\play\pure_{\play}} \parens{\pure_{\run}} + \snoise_{\play,\run}} \leq  \payv_{\play\pure_{\play}} \parens{\pure_{\run}} + \snoise_{\play,\run} + {\ell \over 2}
\end{equation}

Taking $\exof{\; \cdot\given \curr[\filter]}$, we get:

\begin{equation}
\payv_{\play\pure_{\play}}\parens{\state_{\run}} - {\ell \over 2} \leq \exof{\round\parens{\payv_{\play\pure_{\play}} \parens{\pure_{\run}} + \snoise_{\play,\run}} \given \curr[\filter]} \leq  \payv_{\play\pure_{\play}}\parens{\state_{\run}} + {\ell \over 2}
\end{equation}

from which we get 
\begin{equation}
	\abs{\exof{\round\parens{\payv_{\play\pure_{\play}} \parens{\pure_{\run}} + \snoise_{\play,\run}} \given \curr[\filter]} - \payv_{\play\pure_{\play}}\parens{\state_{\run}}} \leq {\ell \over 2}
\end{equation}
This proves the first part of the claim.

For the second part, following the same procedure as before, we we have:
\begin{equation}
	\abs{\exof{\round\parens{\payv_{\play\pure_{\play}} \parens{\hat{\pure}_{\run}} + \snoise_{\play,\run}} \given \curr[\filter]} - \payv_{\play\pure_{\play}}\parens{\hat{\state}_{\run}}} \leq {\ell \over 2}
\end{equation}

Then, the second part of the lemma follows by triangle inequality.
\end{proof}

\begin{proof}[Proof of \cref{claim:eps}]
Since $\payv\parens{\cdot}$ is continuous on $\strats$ and $\strats$ is compact, $\payv\parens{\cdot}$ is also Lipschitz continuous. Hence, we have:
\begin{equation}
	\dnorm{\Tilde{\bias}_{\run}} = \dnorm{\payv\parens{\curr} - \payv\parens{\hat{\state}_{\run}}} \leq K \norm{\curr - \hat{\state}_{\run}} = O\parens{\eps_{\run}}
\end{equation}
from which the result follows.

For the other part, we have
\begin{align}
	\dnorm{U_{\play,\run}} \leq \dnorm{V_{\play,\run}} + \dnorm{\exof{\round\parens{\pay_{\play}\parens{\hat{\pure}_{\run}} + \snoise_{\play,\run}\cdot e} \given \curr[\filter]}}
\end{align}
where $e$ is the all ones vector. For the second term of the RHS, we have by triangle inequality:
\begin{equation}
\dnorm{\exof{\round\parens{\payv_{\play}\parens{\hat{\pure}_{\run}} + \snoise_{\play,\run}\cdot e} \given \curr[\filter]}} \leq \dnorm{\exof{\round\parens{\payv_{\play}\parens{\hat{\pure}_{\run}} + \snoise_{\play,\run}\cdot e} \given \curr[\filter]} -\payv_{\play}\parens{\hat{\state}_{\run}}} 
+ \dnorm{\payv_{\play}\parens{\hat{\state}_{\run}}} 
\end{equation}
By \cref{claim:ineqs}, we have:
\begin{equation}
  \supnorm{\exof{\round\parens{\payv_{\play}\parens{\hat{\pure}_{\run}} + \snoise_{\play,\run}\cdot e} \given \curr[\filter]} - \payv_{\play}\parens{\hat{\state}_{\run}}} \leq {\ell \over 2}  
\end{equation}
and therefore, $\dnorm{\exof{\round\parens{\payv_{\play}\parens{\hat{\pure}_{\run}} + \snoise_{\play,\run}\cdot e} \given \curr[\filter]} - \payv_{\play}\parens{\hat{\state}_{\run}}}$ is bounded. Moreover, we have:
\begin{equation}
	\dnorm{\payv_{\play}\parens{\hat{\state}_{\run}}} \leq \max_{\pure \in \pures}\abs{\pay_{\play}\parens{\pure}}
\end{equation}
Hence, combining the above, we get that the term $\dnorm{\exof{\round\parens{\payv_{\play}\parens{\hat{\pure}_{\run}} + \snoise_{\play,\run}\cdot e} \given \curr[\filter]}}$ is bounded.\\[2mm]
Regarding the term $V_{\play,\run}$, we have that each one if its entry $V_{\play\pure_{\play},\run}$, satisfies:
\begin{align}
	\abs{V_{\play\pure_{\play},\run}} & \leq {1 \over \hat{\state}_{\play\pure_{\play},\run}}\abs{\round\parens{\pay_{\play}\parens{\pure_{\play},\hat{\pure}_{-\play,\run}} + \snoise_{\play,\run}}}
	\notag\\
	& \leq {1 \over \hat{\state}_{\play\pure_{\play},\run}}\parens{{\ell \over 2} + \abs{\pay_{\play}\parens{\pure_{\play},\hat{\pure}_{-\play,\run}} + \snoise_{\play,\run}}}
	\notag\\
	& \leq {1 \over \hat{\state}_{\play\pure_{\play},\run}}\parens{{\ell \over 2} + \abs{\pay_{\play}\parens{\pure_{\play},\hat{\pure}_{-\play,\run}}} + \abs{\snoise_{\play,\run}}}
	\notag\\
	& \leq {\abs{\pures_{\play}} \over \eps_{\run}}\parens{{\ell \over 2} + \abs{\snoise_{\play,\run}} + \max_{\pure \in \pures} \abs{\pay_{\play}\parens{\pure}}}
\end{align}
Finally, since $\snoise_{\play,\run} \in \mathcal{L}^2\parens{\prob}, \forall \play$ and $\mathcal{L}^2\parens{\prob} \subset \mathcal{L}^1\parens{\prob}$, we get for any $\eps_{\run} >0$ that $\abs{V_{\play\pure_{\play},\run}}^2 \in \mathcal{L}^2\parens{\prob}$, from which we conclude that $\exof{\dnorm{V_{\run}}^2 \given \curr[\filter]} = O\parens{1 / \eps_{\run}^2}$. Then, the result follows.
\end{proof}
\section{Concluding remarks}
\label{sec:conclusion}

Our results show that the impact of quantization on learning in games is somewhat different than what one would perhaps expect:
instead of a gradual deterioration of the quality of learning as the quantization error increases, we see that \eqref{eq:FTQL} continues to identify strict \aclp{NE} \emph{perfectly} if the quantization is not too coarse, and the rate of convergence is as in the non-quantized case.
We find this property particularly appealing, as it shows that, if the feedback process is quantized judiciously, we can achieve significant gains in terms of memory storage and bandwidth expenditures \emph{without} compromising the quality of learning.\\

\bibliographystyle{IEEEtran}
\bibliography{bibtex/IEEEabrv.bib,bibtex/Bibliography.bib}

\end{document}